\numberwithin{equation}{section}
\theoremstyle{plain}
\newtheorem{thm}{Theorem}[section]
\newtheorem{prp}[thm]{Proposition}
\newtheorem{cor}[thm]{Corollary}
\newtheorem*{ego*}{Egorov's Theorem}
\theoremstyle{definition}
\newtheorem*{rem*}{Remark}
\newcommand{\dd}{\mathrm{d}}
\newcommand{\ee}{\mathrm{e}}
\newcommand{\ii}{\mathrm{i}}
\newcommand{\Z}{\mathbb{Z}}
\newcommand{\R}{\mathbb{R}}
\newcommand{\T}{\mathbb{T}}
\newcommand{\LL}{\mathbb{L}}
\newcommand{\cA}{\mathcal{A}}
\newcommand{\cC}{\mathcal{C}}
\newcommand{\cH}{\mathcal{H}}
\newcommand{\cG}{\mathcal{G}}
\newcommand{\fa}{\mathfrak{a}}
\newcommand{\fb}{\mathfrak{b}}
\newcommand{\bn}{\mathbf{n}}
\newcommand{\bm}{\mathbf{m}}
\newcommand{\bk}{\mathbf{k}}
\newcommand{\br}{\mathbf{r}}
\DeclareSymbolFont{extraup}{U}{zavm}{m}{n}
\DeclareMathSymbol{\varheart}{\mathalpha}{extraup}{86}
\DeclareMathSymbol{\vardiamond}{\mathalpha}{extraup}{87}
\newcommand*{\ovF}[1]{%
  $\m@th\overline{\raisebox{0pt}[\dimexpr\height+0.3mm\relax]{#1}}$%
	}
\title[Limiting distributions of ergodic quantum walks]{Limiting distributions of ergodic continuous-time quantum walks on periodic graphs}
\author{Anne Boutet de Monvel, Kiran Kumar A.S., Mostafa Sabri}
\address{Institut de Math\'ematiques de Jussieu-Paris Rive Gauche, Universit\'e de Paris, 8 place Aur\'elie Nemours, case 7012, 75205 Paris Cedex 13, France.}
\email{anne.boutet-de-monvel@imj-prg.fr}
\address{Science Division, New York University Abu Dhabi, Saadiyat Island, Abu Dhabi, UAE.}
\email{kas10285@nyu.edu}
\email{mostafa.sabri@nyu.edu}
\subjclass[2020]{Primary 47A35. Secondary 58J51}
\keywords{Continuous-time quantum walks, quantum ergodicity, crystal lattices.}
\newlength{\temp@wc@width}
\newlength{\temp@wc@height}
\newcommand{\widecheck}[1]{%
  \setlength{\temp@wc@width}{\widthof{$#1$}}%
  \setlength{\temp@wc@height}{\heightof{$#1$}}%
  #1\hspace{-\temp@wc@width}%
  \raisebox{\temp@wc@height+2pt}[\heightof{$\widehat{#1}$}]%
     {\rotatebox[origin=c]{180}{\vbox to 0pt{\hbox{$\widehat{\hphantom{#1}}$}}}}%
}
\begin{document}

\begin{abstract}
In this expository note, we study several families of periodic graphs which satisfy a sufficient condition for the ergodicity of the associated continuous-time quantum walk. For these graphs, we compute the limiting distribution of the walk explicitly. We uncover interesting behavior where in some families, the walk is ergodic in both horizontal and sectional directions, while in others, ergodicity only holds in the horizontal (large $N$) direction. We compare this to the limiting distribution of classical random walks on the same graphs.
\end{abstract}

\maketitle

\section{Introduction}

This paper is concerned with studying the quantum dynamics on crystals $\Gamma$, i.e. connected $\Z^d$-periodic graphs. The vertex set takes the form
\begin{equation}\label{e:v=vf}
V = V_f + \Z_\fa^d \,,
\end{equation}
where
\[
V_f = \{v_1,v_2,\dots,v_\nu\}
\]
is a finite set representing the vertices of a \emph{fundamental domain} and $\Z_\fa^d$ is the lattice $\Z^d$ expressed in some coordinate system $\fa_1,\dots,\fa_d$, i.e. $(\fa_j)$ are linearly independent vectors in $\R^d$. If $\fa_i = \mathfrak{e}_i$ is the standard basis, $\Z_\fa^d=\Z^d$. In general, $\Z_\fa^d:=\{\sum_{i=1}^d n_i\fa_i: \bn\in \Z^d\}$. 

For example, $\Gamma=\Z^d$ has $V_f=\{0\}$ and $\fa_i=\mathfrak{e}_i$. For an infinite $1d$ $k$-strip in $\Z^2$, we can take $v_i=(0,i)$, $i=0,\dots,k$, here $d=1$ and $\fa_1=1$. This example illustrates that $\Gamma$ is in general embedded in a Euclidean space which may have dimension larger than $d$. If $\Gamma$ is the honeycomb lattice, $V_f=\{(0,0),v\}$ with $v=\frac{a}{2}(1,\frac{1}{\sqrt{3}})$, $\fa_1=a(1,0)$, $\fa_2=\frac{a}{2}(1,\sqrt{3})$, where $a>0$.

By \eqref{e:v=vf}, any $v\in V$ takes the form $v=v_i+\bn_\fa$. For all practical purposes we may identify $\ell^2(\Gamma)\equiv \ell^2(\Z^d)^\nu$ via 
\begin{equation}\label{e:identi}
\psi(v_i+\bn_\fa) \leftrightarrow \psi_i(\bn)
\end{equation}
for $i=1,\dots,\nu$ and $\bn\in\Z^d$. For example, functions on a $k$-strip are identified with vector functions on $\Z$, each vector having length $k$.

The adjacency matrix on $\Gamma$ is defined as usual by $(\cA f)(v) = \sum_{w\sim v} f(w)$ for $v\in \Gamma$, where $v\sim w$ means that $v$ and $w$ are neighbors. The \emph{continuous-time quantum walk on $\Gamma$} is given by the unitary semigroup $\ee^{\ii t\cA}$. In this paper we are concerned with lattices $\Gamma$ whose quantum dynamics is \emph{ergodic}. In analogy to the classical setting where a walk starts from some vertex $v\in \Gamma$ and hops at random to its neighbors at each step, here we start from an initial state $\delta_v$, where $\delta_v(w)=1$ if $w=v$ and $0$ otherwise, and study the long-time behavior of $\ee^{\ii t\cA}\delta_v$.

More precisely, we consider finite subsets $\Gamma_N\subset \Gamma$ given by $\Gamma_N = \cup_{n\in \LL_N^d}(V_f+\bn_\fa)$, where $\LL_N^d=\{0,\dots,N\}^d$, and consider the restriction $A_N$ of $\cA$ to $\Gamma_N$ with \emph{periodic boundary conditions}. For example, if $\Gamma=\Z$, then we are approximating the graph with $N$-cycles. For a fixed $v\in \Gamma_N$, we then consider $\mu_{T,v}^N(w):= \frac{1}{T}\int_0^T |(\ee^{\ii t A_N}\delta_v)(w)|^2\,\dd t$, which is a probability measure on $\Gamma_N$ as $\ee^{\ii t A_N}$ is unitary, and we compare it with the \emph{uniform measure} on $\Gamma_N$, that is $\mu^N(w)=\frac{1}{|\Gamma_N|}$. A quantum walk is ergodic if $\mu_{T,v}^N\approx \mu^N$ when $T\to\infty$, followed by $N\to\infty$. See \cite{BdMS} for more details on why one should consider the mean time average and this particular order of limits. More broadly, we say the quantum walk is ergodic if $\mu_{T,v}^N$ approaches a measure $\tilde{\mu}^N_v$ which ``has a density'' with respect to the uniform measure $\mu^N$. We call $\tilde{\mu}^N_v$ the \emph{limiting distribution} of the walk, and computing it on simple examples, when it exists, is the main purpose of the present paper.

Ergodicity is a strong and very precise result of \emph{delocalization}. Let us discuss this connection a bit further. In general the spectrum of $\cA$ on the periodic graph $\Gamma$ consists of $m\le \nu$ bands of absolutely continuous spectrum, and possibly a finite number of eigenvalues which are called \emph{flat bands}, see \cite[Section 2]{SY} and references therein. There is no singularly continuous spectrum. It is easy to show that $\mu_{T,\psi}(w):=\frac{1}{T}\int_0^T |(\ee^{\ii t\cA}\psi)(w)|^2\,\dd t \to \sum_k |(P_k\psi)(w)|^2$ as $T\to\infty$, where $P_k$ is the orthogonal projection onto the distinct eigenvalues of $\cA$. In particular, in the regime of spectral delocalization where $\cA$ has no eigenvalues, one gets that $\mu_{T,v}(w):=\frac{1}{T}\int_0^T |(\ee^{\ii t\cA}\delta_v)(w)|^2\,\dd t$ converges to zero everywhere. Ergodicity implies more strongly that this decay to zero is approximately \emph{the same at all vertices} when considering an approximating sublattice $\Gamma_N$, namely $\mu_{T,v}^N\approx \frac{1}{|\Gamma_N|}$ (which goes to zero when $N\to\infty$). As we mentioned earlier, in general the limiting measure $\tilde{\mu}^N_v$ may have a density, so that the mass on each vertex $w$ is some $\frac{f(v,w)}{|\Gamma_N|}$.

Let us mention that, when working directly on the infinite graph $\Gamma$, since $\mu_{T,v}\xrightarrow{w} 0$ in the absence of eigenvalues, one can be interested in studying the speed of decay and properly normalizing the process to obtain a nontrivial limit. There is a large literature on this topic, known as \emph{limit theorems} in the quantum walks community. In this framework it is not necessary to consider time averages, one can study $m_{t,\psi}(w):=|(\ee^{\ii t\cA}\psi)(w)|^2$ directly. Then the correct scaling is linear (the motion is ballistic), and if we study the process per unit time (i.e. study the limit of the random variable $\frac{X_t}{t}$, where $X_t$ has distribution $m_{t,\psi}$), then the limit is nontrivial. We refer to \cite{Ko} for one of the earliest results in the case of discrete-time quantum walks on $\Z$. As this is not the topic of this paper, for further extensions we only mention \cite{GJS} for a multi-dimensional analog, \cite{HKSS} for certain results on crystals and \cite[Section 5]{BdMS0} for limits theorems of continuous-time quantum walks on periodic graphs $\Gamma$. See e.g. \cite{Ko0} for a more comprehensive survey of this topic.

\subsection{The ``black box'' theorem} 

In an earlier paper \cite{BdMS} we proved a general criterion for the ergodicity of continuous-time quantum walks. This result holds more generally for unitary semigroups $\ee^{-\ii t \cH}$ where $\cH$ is a periodic Schr\"orindger operator, $\cH = \cA+Q$. To state the result, let us recall that using Bloch-Floquet theory, $\cH$ is unitarily equivalent to a multiplication operator $M_H$ on $\ell^2(\T^d)^\nu$, where $(M_H f)(\theta)=H(\theta)f(\theta)$ for $\theta\in\T^d$ and $H(\theta)$ is a $\nu\times\nu$ matrix. Here $\T^d\equiv [0,1)^d$. Denote the eigenvalues of $H(\theta)$ by $E_s(\theta)$, $s=1,\dots \nu$, these are also known as \emph{band functions}, $P_s(\theta)$ the corresponding orthogonal eigenprojections and $P_{E_s}(\theta)$ the orthogonal eigenprojections onto the distinct eigenvalues, that is $P_{E_s}(\theta)=\sum_{w\,:\,E_w(\theta)=E_s(\theta)}P_w(\theta)$. Let $\nu'\le \nu$ be the number of distinct eigenvalues (it is independent of $\theta$ on a subset of full Lebesgue measure \cite[Lem. 2.2]{SY}). Then we have the following: 

\begin{thm}[From \cite{BdMS}]\label{thm:pergra}
Assume that for any $1\le s,w\le \nu$, we have 
\begin{equation}\label{e:flo}
\sup_{\bm\neq 0} \frac{\#\{\br\in \LL_N^d:E_s(\frac{\br+\bm}{N})-E_w(\frac{\mathbf{r}}{N})=0\}}{N^d}\to 0
\end{equation}
as $N\to\infty$. 
Then for a large family of observables,
\[
\lim_{N\to\infty}\Big|\lim_{T\to\infty}\frac{1}{T}\int_0^T\langle \ee^{-\ii tH_N}\delta_{\bn,p}, a\ee^{-\ii tH_N}\delta_{\bn,p}\rangle\,\dd t \\
- \langle a\rangle_p\Big|=0\,,
\]
where, denoting $\langle a_q\rangle:= \frac{1}{N^d} \sum_{\bk\in \LL_N^d} a_q(\bk)$, 
\begin{equation}\label{e:avp}
\langle a\rangle_p = \frac{1}{N^d}\sum_{r\in\LL_N^d} \sum_{q=1}^\nu \langle a_q\rangle \sum_{s=1}^{\nu'} \Big|P_{E_s}\Big(\frac{\br}{N}\Big)(p,q)\Big|^2\,.
\end{equation}
\end{thm}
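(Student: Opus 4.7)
The plan is to combine the standard time-averaging argument from the spectral theorem with the discrete Bloch--Floquet diagonalization of $H_N$. Since $H_N$ acts on a finite-dimensional space, the spectral decomposition $H_N=\sum_\lambda \lambda\,P_\lambda^N$ (with $\lambda$ ranging over the \emph{distinct} eigenvalues) and the fact that off-diagonal phases $\ee^{\ii t(\lambda_j-\lambda_k)}$ Cesàro-average to zero when $\lambda_j\neq\lambda_k$ give
\[
\lim_{T\to\infty}\frac{1}{T}\!\int_0^T\!\langle \ee^{-\ii tH_N}\delta_{\bn,p}, a\,\ee^{-\ii tH_N}\delta_{\bn,p}\rangle\,\dd t \;=\; \sum_\lambda \langle P_\lambda^N \delta_{\bn,p}, a P_\lambda^N \delta_{\bn,p}\rangle.
\]
Because of periodic boundary conditions, $H_N$ is translation-invariant on the discrete torus $\LL_N^d$, so the finite Bloch--Floquet (i.e.\ $\Z^d$-DFT) transform block-diagonalizes it as $\bigoplus_{\br\in\LL_N^d}H(\br/N)$. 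The eigenvalues of $H_N$ are $\{E_s(\br/N)\}$, with orthonormal eigenvectors
\[
\phi_{\br,s}(v_i+\bk_\fa) \;=\; N^{-d/2}\,\ee^{2\pi\ii\,\bk\cdot\br/N}\,\psi_{\br,s}(i),
\]
where $\psi_{\br,s}\in\C^\nu$ is a normalized eigenvector of the fiber $H(\br/N)$ associated with $E_s(\br/N)$.

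I would then expand $\sum_\lambda\langle P_\lambda^N\delta_{\bn,p}, a P_\lambda^N\delta_{\bn,p}\rangle$ in this Bloch basis. Using that $a$ acts diagonally as $(a\psi)_q(\bk)=a_q(\bk)\psi_q(\bk)$ under the identification \eqref{e:identi} and introducing the discrete Fourier coefficient $\hat{a}_q(\bm):=N^{-d}\sum_{\bk\in\LL_N^d}a_q(\bk)\ee^{2\pi\ii\,\bk\cdot\bm/N}$, a direct computation produces a double sum over pairs $(\br,s,\br',s')$ subject to the constraint $E_s(\br/N)=E_{s'}(\br'/N)$. Writing $\br'=\br+\bm$ and separating $\bm=0$ from $\bm\neq 0$, the $\bm=0$ block becomes
\[
\frac{1}{N^d}\sum_{\br\in\LL_N^d}\sum_{q=1}^\nu \langle a_q\rangle\!\!\sum_{\substack{s,s':\\ E_s(\br/N)=E_{s'}(\br/N)}}\!\!\psi_{\br,s}(p)\overline{\psi_{\br,s'}(p)}\,\overline{\psi_{\br,s}(q)}\psi_{\br,s'}(q).
\]
The innermost $s,s'$ sum collapses to $\sum_{s=1}^{\nu'}|P_{E_s}(\br/N)(p,q)|^2$ by applying the standard projector identity $(P_{E_\lambda}(\theta))_{pq}=\sum_{s:E_s=\lambda}\psi_s(p)\overline{\psi_s(q)}$ eigenvalue by eigenvalue. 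This reproduces exactly $\langle a\rangle_p$ as defined in \eqref{e:avp}.

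The main obstacle is showing that the $\bm\neq 0$ remainder is $o(1)$ as $N\to\infty$; this is where hypothesis \eqref{e:flo} is used in an essential way. For each fixed $\bm\neq 0$ and each pair of band indices $(s,s')$, the resonance set $\{\br\in\LL_N^d:E_s((\br+\bm)/N)=E_{s'}(\br/N)\}$ has cardinality $\epsilon_{\bm,s,s';N}\,N^d$ with $\sup_{\bm\neq 0,s,s'}\epsilon_{\bm,s,s';N}\to 0$. Bounding the $\psi$-factors by $|\psi_{\br,s}(p)|\le 1$ and applying Cauchy--Schwarz in $q$ (using $\sum_i|\psi_{\br,s}(i)|^2=1$), each $\bm$-slice is majorized by $\nu^2\,\epsilon_{\bm;N}\,\max_q|\hat{a}_q(\bm)|$, so the total off-diagonal contribution is at most $\nu^2\bigl(\sup_{\bm\neq 0}\epsilon_{\bm;N}\bigr)\sum_{\bm\neq 0}\max_q|\hat{a}_q(\bm)|$. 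This vanishes provided the ``large family of observables'' is chosen so that $\sum_\bm|\hat{a}_q(\bm)|$ remains bounded uniformly in $N$---a condition satisfied by observables of finite range or, more generally, sufficiently regular $a_q$ extending from $\T^d$. Identifying and justifying that quantitative class of admissible observables is the only delicate step beyond the Bloch-basis computation outlined above.
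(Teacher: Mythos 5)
This theorem is imported verbatim from \cite{BdMS} as a ``black box''; the present paper contains no proof of it, so there is nothing in-text to compare against line by line. That said, your argument is exactly the strategy behind the cited result: finite-dimensional time-averaging to kill the off-diagonal phases, discrete Bloch--Floquet diagonalization of $H_N$ into the fibers $H(\br/N)$, identification of the $\bm=0$ block with $\langle a\rangle_p$ via the projector identity, and use of hypothesis \eqref{e:flo} to show the resonant $\bm\neq 0$ contribution is $o(1)$. The computation of the diagonal block is correct, and your bound $\nu^2\bigl(\sup_{\bm\neq 0}\epsilon_{\bm;N}\bigr)\sum_{\bm\neq 0}\max_q|\hat a_q(\bm)|$ is the right way to exploit the uniformity in $\bm$ of \eqref{e:flo}. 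The only incomplete piece is the one you flag yourself: the theorem's ``large family of observables'' must be specified so that $\sum_{\bm}|\hat a_q(\bm)|$ stays bounded uniformly in $N$ (e.g.\ observables of finite range, or samples of a fixed symbol with absolutely convergent Fourier series); once that class is fixed, your sketch closes into a complete proof. A minor point worth a sentence in a write-up: at exceptional points $\br/N$ where bands accidentally cross, your $\bm=0$ sum automatically groups eigenvectors by the \emph{distinct} eigenvalues of $H(\br/N)$, which is how $\sum_{s=1}^{\nu'}|P_{E_s}(\br/N)(p,q)|^2$ in \eqref{e:avp} must be read.
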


The initial state $\delta_{\bn,p}\equiv \delta_{v_p}\otimes \delta_{\bn_\fa}$ is a point mass at $v=v_p+\bn_\fa$. That is, $\delta_{\bn,p}$ is the vector on $\Z^d$ will all entries zero except at $\bn$, where the $p$-th coordinate is $1$.\footnote{The theorem is stated slightly differently in \cite[Thm 3.1]{BdMS}, where $\theta$ is written as $\theta_\fb$, in the basis $(\fb_i)$ of coordinates dual to $(\fa_i)$. Here, we use instead the identification \eqref{e:identi} to simplify the notations, as in \cite{SY}. The operator $P(\theta_\fb)$, which was regarded as acting on $\ell^2(V_f)$, is also now regarded as a $\nu\times \nu$ matrix, so that $(P(\theta_\fb)\delta_{v_q})(v_p) = P(\theta)(p,q)$ is the $(p,q)$ entry of the matrix.} Observe that
\[
\langle \ee^{-\ii tH_N}\delta_{\bn,p}, a\ee^{-\ii tH_N}\delta_{\bn,p}\rangle = \sum_{\bk\in\LL_N^d}\sum_{q=1}^\nu a_q(\bk)|(\ee^{-\ii tH_N}\delta_{\bn,p})_q(\bk)|^2
\]
and
\[
\langle a\rangle_p = \frac{1}{N^{2d}}\sum_{\br\in\LL_N^d}\sum_{q=1}^\nu \sum_{\bk\in\LL_N^d} a_q(\bk)\sum_{s=1}^{\nu'}\Big|P_{E_s}\Big(\frac{\br}{N}\Big)(p,q)\Big|^2 \,.
\]
The class of observables is sufficiently large to deduce that if
\[
\mu_{T,v_p+\bn_\fa}^N(\bk_\fa+v_q) :=\frac{1}{T}\int_0^T |(\ee^{-\ii tH_N}\delta_{\bn,p})_q(\bk)|^2\,\dd t = \frac{1}{T}\int_0^T |(\ee^{-\ii tH_N}\delta_{v_p}\mathop\otimes \delta_{\bn_\fa})(\bk_\fa+v_q)|^2\,\dd t\,,
\]
then
\begin{equation}\label{e:genfo}
\mu_{T,v_p+\bn_\fa}^N(\bk_\fa+v_q) \approx \frac{1}{N^{2d}}\sum_{r\in\LL_N^d} \sum_{s=1}^{\nu'}  \Big|P_{E_s}\Big(\frac{\br}{N}\Big)(p,q)\Big|^2 \approx \frac{1}{N^d}\int_{\T^d}\sum_{s=1}^{\nu'}|[P_{E_s}(\theta)(p,q)|^2\,\dd\theta
\end{equation}
as $T\to\infty$ followed by $N\to\infty$, for any $\bn_\fa$ and $\bk_\fa$. This says that the limiting distribution has a density $d(q,p)=\int_{\T^d}\sum_{s=1}^{\nu'}|P_{E_s}(\theta)(p,q)|^2\,\dd\theta$ with respect to the uniform measure. Since all $a_q(\bk)$ are multiplied by the same weight as $\bk$ varies, we see that the weight is fixed on each ``horizontal layer'' $L_q = \{\bk_\fa+v_q:\bk\in \LL_N^d\}$, but the weight can be different from one layer to another, and may depend on the starting point $v_p$, but not on $\bn_\fa$.

Note that $\sum_{q=1}^\nu d(p,q)=\int_{\T^d} \sum_{s=1}^{\nu'} \|P_{E_s}(\theta)\delta_{p}\|^2\,\dd\theta=\int_{\T^d} \|\delta_{p}\|^2\,\dd\theta=1$. So in general, while $\mu_{T,v_p+\bn_\fa}^N(\bk_\fa+v_q)$ may not be perfectly uniform (weight $\frac{1}{|\Gamma_N|}=\frac{1}{\nu N^d}$), the sum $\sum_{q=1}^\nu \mu_{T,v_p+\bn_\fa}^N(\bk_\fa+v_q) \approx \frac{1}{N^d}$ for all $\bn_\fa$, i.e. the total mass in each copy of $V_f$ is constant.

Assumption \eqref{e:flo} essentially says that for any $0\neq \alpha\in\T^d$, the set
\[
\{\theta\in \T^d:E_s(\theta+\alpha)=E_w(\theta)\}
\]
should be an event of measure zero. When applied to $s=w$, this means that $E_s$ should have no nontrivial periods. For $s\neq w$, it is permissible to have $E_s\equiv E_w$, i.e. to have a band function of multiplicity larger than one (as we assume nothing for $\alpha=0$). What is important is that $E_s$ doesn't meet $E_w$ on a \emph{nontrivial} subset of positive measure. This assumption can either be checked by hand if the Floquet matrix is sufficiently simple, or using some tools from the theory of irreducibility of Bloch varieties, see \cite{Wen1} and \cite[\S 5.3]{McKS}.

Assumption \eqref{e:flo} rules out the possibility that $\cH$ has eigenvalues, since $\cH$ has an eigenvalue iff a certain band function $E_s(\theta)\equiv c$ is constant, and if this happens, then the assumption is trivially violated for $w=s$, the underlying fraction being identically $1$ and not converging to zero. In other words, we are already assuming the spectrum of $\cH$ is purely absolutely continuous (AC), which is somehow natural since we are looking for a dynamical delocalization result. However, assumption \eqref{e:flo} is stronger than simply assuming the spectrum is AC, and this stronger assumption seems to be essentially necessary for ergodicity, see \cite[Prp. 1.5]{McKS} for a related result.

Some basic examples were given in \cite[\S 3.2]{BdMS}: if $\nu=1$, i.e. the fundamental cell is simply one vertex, then the limiting measure is uniform. This covers the continuous quantum walk $\ee^{\ii t\cA}$ on $\Z^d$ and the triangular lattice for example. The same property is true for the hexagonal lattice and the infinite ladder, both of which have $\nu=2$. It was also shown that in the cases where $\Gamma$ is a $1d$ strip of width $3$, or a cylinder $\Z\mathop\square C_4$, then the limiting distribution is not uniform.

\subsection{Main results}
In this note we analyze further families of graphs which satisfy our assumptions, and compute the limiting average $\langle a\rangle_p$ explicitly.

We first give the following result, extracted from \cite{McKS}.

\begin{prp}[Case of Cartesian and Tensor Products]\label{prp:mcks}
Suppose $\Gamma_0$ is a periodic graph with $\nu=1$ (for example $\Gamma_0=\Z^d$ or the triangular lattice), and let $G_F$ be any finite graph with $\nu_F=|G_F|$ vertices. Let $\Gamma_1 = \Gamma_0 \mathop\square G_F$ be the Cartesian product , $\Gamma_2 = \Gamma_0 \times G_F$ be the tensor product and $\Gamma_3=\Gamma_0 \boxtimes G_F$ be the strong product of $\Gamma_0$ and $G_F$. Let $E_{\Gamma_0}(\theta)$ be the band function of $\Gamma_0$, $(w_j)$ an orthonormal eigenbasis of $A_{G_F}$ and $(\mu_j)$ the corresponding eigenvalues, $j\le \nu_F$. Then 
\begin{enumerate}
\item The band functions of $\cA_{\Gamma_1}$ are given by $E_j(\theta) = E_{\Gamma_0}(\theta) + \mu_j$.
\item The band functions of $\cA_{\Gamma_2}$ are given by $E_j(\theta) = \mu_j E_{\Gamma_0}(\theta)$.
\item The band functions of $\cA_{\Gamma_3}$ are given by $E_j(\theta) = (1+\mu_j) E_{\Gamma_0}(\theta)+\mu_j$.
\item Assumption~\eqref{e:flo} is satisfied for $\cA_{\Gamma_1}$ but not necessarily for $\cA_{\Gamma_2}$ or $\cA_{\Gamma_3}$.
\item For each of $\cA_{\Gamma_1},\cA_{\Gamma_2}$ and $\cA_{\Gamma_3}$, we have
\[
\langle a\rangle_p = \sum_{q=1}^{\nu_F} \langle a(\cdot+v_q)\rangle \sum_{s=1}^{\nu'_F} |P_{\mu_s}(v_p,v_q)|^2\,,
\]
where $P_{\mu_s}(v_p,v_q) = \sum_{j\,\mu_j=\mu_s} w_j(v_p)\overline{w_j(v_q)}$ is the (kernel) of the orthogonal projection for the distinct eigenvalues of the finite graph.
\end{enumerate}
\end{prp}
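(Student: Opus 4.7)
The plan is to exploit the tensor structure of the adjacency matrices. For the three products one has
\begin{align*}
\cA_{\Gamma_1} &= \cA_{\Gamma_0} \otimes I + I \otimes A_{G_F}, \\
\cA_{\Gamma_2} &= \cA_{\Gamma_0} \otimes A_{G_F}, \\
\cA_{\Gamma_3} &= \cA_{\Gamma_0} \otimes I + I \otimes A_{G_F} + \cA_{\Gamma_0} \otimes A_{G_F}.
\end{align*}
Since $\nu_0=1$, the Bloch--Floquet transform on the $\Gamma_0$ factor replaces $\cA_{\Gamma_0}$ by multiplication by the scalar $E_{\Gamma_0}(\theta)$, producing Floquet matrices $H_1(\theta)=E_{\Gamma_0}(\theta)I+A_{G_F}$, $H_2(\theta)=E_{\Gamma_0}(\theta)A_{G_F}$, and $H_3(\theta)=E_{\Gamma_0}(\theta)(I+A_{G_F})+A_{G_F}$ acting on $\C^{\nu_F}$. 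Each of these is a polynomial in $A_{G_F}$ with scalar (in $\theta$) coefficients, so diagonalizing $A_{G_F}$ in the $\theta$-independent orthonormal basis $(w_j)$ simultaneously diagonalizes all $H_k(\theta)$, and one reads off the band functions asserted in (1)--(3).

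For item (4), I would analyze the difference $E_s(\theta+\alpha)-E_w(\theta)$ in each case. In the $\Gamma_1$ case it equals $f_\alpha(\theta)+(\mu_s-\mu_w)$, where $f_\alpha(\theta):=E_{\Gamma_0}(\theta+\alpha)-E_{\Gamma_0}(\theta)$. Since $\Gamma_0$ itself satisfies \eqref{e:flo}, one has $f_\alpha\not\equiv 0$ for $\alpha\neq 0$; and $f_\alpha$ cannot be a nonzero constant $c$ either, since iterating would give $E_{\Gamma_0}(\theta+n\alpha)=E_{\Gamma_0}(\theta)+nc$, contradicting boundedness of $E_{\Gamma_0}$ on $\T^d$. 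Hence $f_\alpha$ is real-analytic and non-constant, so its level sets have Lebesgue measure zero, and a standard discretization estimate on analytic varieties recovers \eqref{e:flo}. For $\Gamma_2$ and $\Gamma_3$ I would instead exhibit flat bands as obstructions: whenever $0$ is an eigenvalue of $A_{G_F}$ (e.g.\ any bipartite $G_F$ with unequal parts), $\Gamma_2$ has a flat zero band, and whenever $-1$ is an eigenvalue (e.g.\ $G_F=K_2$), $\Gamma_3$ has a flat $-1$ band; in these cases \eqref{e:flo} fails with $s=w$.

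Item (5) is the payoff and follows directly from \eqref{e:genfo}. The key observation is that the spectral projections of $H_k(\theta)$ are $\theta$-independent: at a distinct eigenvalue $E_s(\theta)$ they coincide with the orthogonal projection $P_{\mu_s}$ onto the $\mu_s$-eigenspace of $A_{G_F}$, whose entries are $P_{\mu_s}(v_p,v_q)=\sum_{j:\,\mu_j=\mu_s} w_j(v_p)\overline{w_j(v_q)}$. Substituting $|P_{E_s}(\br/N)(p,q)|^2=|P_{\mu_s}(v_p,v_q)|^2$ into \eqref{e:genfo} makes the average over $\br$ trivial, and identifying $\langle a_q\rangle$ with $\langle a(\cdot+v_q)\rangle$ yields the claimed formula uniformly for all three products.

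The principal obstacle I anticipate is the rigorous passage in (4) from the fact that the level set $\{f_\alpha=c\}$ has measure zero to the uniform-in-$\bm$ discrete estimate required by \eqref{e:flo}. Real-analyticity of $E_{\Gamma_0}$ reduces this to a standard lattice-point bound on real-analytic subvarieties of $\T^d$, but making the uniformity quantitative requires some care and may invoke the tools on irreducibility of Bloch varieties recalled in \cite[\S 5.3]{McKS}.
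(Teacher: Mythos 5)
Your proposal is correct and follows essentially the same route as the paper: the Floquet matrices $H_k(\theta)$ are polynomials in $A_{G_F}$ with scalar ($\theta$-dependent) coefficients, hence diagonalized by the fixed orthonormal basis $(w_j)$, which yields (1)--(3) and the $\theta$-independence of the eigenprojections needed for (5), while the flat bands at $\mu_j=0$ (resp.\ $\mu_j=-1$) give the failures in (4). The only divergence is that where you sketch a direct real-analyticity argument for the validity of \eqref{e:flo} in the Cartesian case (correctly flagging the uniform discrete count as the delicate step), the paper simply cites \cite[Prop.~1.3]{McKS}, just as it cites \cite{McKS} for claims (1)--(2).
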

For example, if $\Gamma_0=\Z^d$, then $E_{\Gamma_0}(\theta)=2\sum_{i=1}^d \cos 2\pi\theta_i$ and if $\Gamma_0$ is the triangular lattice, then $E_{\Gamma_0}(\theta) = 2\cos 2\pi\theta_1+2\cos 2\pi\theta_2+2\cos 2\pi(\theta_1+\theta_2)$, for $\theta_i\in [0,1)$. Here $\Gamma_{1,2,3}$ are viewed as periodic graphs with fundamental domain containing $\nu_F$ vertices, cf. \cite[Lemma 3.1, \S 3.4]{McKS}, with $(v_i)$ the vertices of $G_F$. 

Arguing as before, we get for these more special graphs that
\begin{equation}\label{e:limcar}
\mu_{T,v_p+\bn_\fa}^N(\bk_\fa+v_q) \approx \frac{1}{N^d} \sum_{s=1}^{\nu'} |P_{\mu_s}(v_p,v_q)|^2 =: \frac{1}{N^d} d(p,q) \,.
\end{equation}
whenever \eqref{e:flo} is satisfied. This gives a more satisfactory concept of a quantum limiting distribution than in \cite{McKS} where quantum ergodicity was assessed by the behaviour of eigenvector bases, and it was shown in \cite[\S 4.5]{McKS} that such a limiting distribution depends on the eigenvector basis. In contrast, here the RHS of \eqref{e:limcar} depends only on the graph.

Our main target now is to compute the weights $d(p,q)$ for specific finite graphs $G_F$. Because of point (4) above, the theorems are illustrated only for the Cartesian product, but some hold more generally. For definiteness, the reader can assume $\Gamma_0=\Z$ in all these results, which is already interesting. However, nothing changes for any $\Gamma_0$ having a single vertex in its fundamental domain, such as $\Z^d$ and the triangular lattice.

A nice simplification in the family of Cartesian products is that the limiting weight $d(p,q)$ in \eqref{e:limcar} depends only on the finite graph $G_F$, compared to the general case \eqref{e:genfo}, where the weight depends on the full Floquet matrix and computations become more daunting. Still, as we will see, Cartesian products already offer interesting contrasting behaviors, depending on the choice of $G_F$. In the following results, $\Gamma_0$ is any $\Z^d$-periodic graph with a single vertex in its fundamental domain.

\subsubsection{Cycles}
We start our analysis with cycles, $G_F=C_\nu$. The graph $\Z\mathop\square C_\nu$ then looks like a cylinder with base $C_\nu$.

\begin{thm}\label{thm:cyc}
Suppose $\Gamma_1= \Gamma_0 \mathop\square C_\nu$, where $C_\nu$ is a cycle of order $\nu$.
\begin{enumerate}
\item If $\nu$ is odd, then $d(p,p) = \frac{2\nu-1}{\nu^2}$ and $d(p,q) = \frac{\nu-1}{\nu^2}$ for all $q\neq p$.
\item If $\nu$ is even, then $d(p,p) = d(p,p+\frac{\nu}{2}) = \frac{2}{\nu}(1-\frac{1}{\nu})$ and $d(p,q) = \frac{1}{\nu}(1-\frac{2}{\nu})$ for all other $q$.
\end{enumerate}
\end{thm}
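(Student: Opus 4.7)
The plan is to apply Proposition~\ref{prp:mcks}(5) with $G_F = C_\nu$, so the task reduces to explicitly computing $d(p,q) = \sum_{s=1}^{\nu'_F} |P_{\mu_s}(v_p,v_q)|^2$ for the cycle. The adjacency matrix of $C_\nu$ is a circulant matrix diagonalized by the discrete Fourier basis $w_j(k) = \frac{1}{\sqrt{\nu}} \ee^{2\pi\ii jk/\nu}$ with eigenvalues $\mu_j = 2\cos(2\pi j/\nu)$ for $j=0,\dots,\nu-1$. First I would enumerate the distinct eigenvalues: for $\nu$ odd, $\mu_0 = 2$ is simple and the pairs $\{j,\nu-j\}$ with $j=1,\dots,(\nu-1)/2$ give doubly degenerate eigenvalues; for $\nu$ even, $\mu_0=2$ and $\mu_{\nu/2}=-2$ are simple while $\{j,\nu-j\}$ for $j=1,\dots,\nu/2-1$ are doubly degenerate.

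Using $w_{\nu-j} = \overline{w_j}$, the projection onto each doubly degenerate eigenvalue is
\[
P_{\mu_j}(v_p,v_q) = w_j(p)\overline{w_j(q)} + \overline{w_j(p)} w_j(q) = \tfrac{2}{\nu}\cos\bigl(2\pi j(p-q)/\nu\bigr),
\]
while the simple ones give $P_{\mu_0}(v_p,v_q) = \frac{1}{\nu}$ and (in the even case) $P_{\mu_{\nu/2}}(v_p,v_q) = \frac{(-1)^{p-q}}{\nu}$. Setting $m:=p-q$, this yields
\[
d(p,q) = \frac{c_\nu}{\nu^2} + \frac{4}{\nu^2}\sum_{j=1}^{J}\cos^2\!\bigl(2\pi jm/\nu\bigr),
\]
with $(c_\nu,J) = (1,(\nu-1)/2)$ for $\nu$ odd and $(2,\nu/2-1)$ for $\nu$ even.

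To evaluate the cosine-squared sum I would use $\cos^2 x = \tfrac12(1+\cos 2x)$ and the standard identity $\sum_{j=0}^{\nu-1} \ee^{4\pi\ii jm/\nu} = \nu\cdot\one_{2m\equiv 0\,(\bmod \nu)}$, combined with the symmetry $\cos(4\pi(\nu-j)m/\nu)=\cos(4\pi jm/\nu)$ to fold the full sum over $j=1,\dots,\nu-1$ onto the range $j=1,\dots,J$. In the odd case, $m\not\equiv 0$ forces the full sum to vanish, giving $\sum_{j=1}^{(\nu-1)/2}\cos(4\pi jm/\nu)=-\tfrac12$. In the even case, $2m\equiv 0\pmod\nu$ happens exactly when $m\equiv 0$ or $m\equiv \nu/2$; accounting for the fixed $j=\nu/2$ term $(=1)$ in the folding step then yields the right sum.

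The plan is to plug $m=0$, $m=\nu/2$ (even case), and generic $m$ into the formula above and simplify. The only delicate step is bookkeeping the parities and edge cases when folding the sum; there is no real obstacle beyond that, since everything reduces to three short arithmetic computations matching the claimed values $\frac{2\nu-1}{\nu^2}$, $\frac{\nu-1}{\nu^2}$, $\frac{2}{\nu}(1-\frac{1}{\nu})$, and $\frac{1}{\nu}(1-\frac{2}{\nu})$. As a sanity check I would verify $\sum_{q=1}^\nu d(p,q) = 1$ in both cases, which is required by the identity $\sum_{s=1}^{\nu'}P_{\mu_s} = I$ applied at the diagonal entry of $P_{\mu_s}^2=P_{\mu_s}$.
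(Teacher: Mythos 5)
Your proposal is correct and follows essentially the same route as the paper: diagonalize $C_\nu$ in the discrete Fourier basis, pair the degenerate eigenvalues $\mu_j=\mu_{\nu-j}$ to get $|P_{\mu_j}(v_p,v_q)|^2=\frac{4}{\nu^2}\cos^2(2\pi j(p-q)/\nu)$, and evaluate the resulting cosine sum (the paper uses the closed-form Dirichlet-type identity where you fold a roots-of-unity sum, but these are the same computation). The only cosmetic difference is that the paper fixes $p=0$ by homogeneity rather than carrying $m=p-q$ throughout.
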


\subsubsection{Path Graphs}
Consider the path graph $P_\nu$ on $\nu$ vertices numbered $\{1,2,\ldots, \nu\}$.  

\begin{thm}\label{thm:path}
	Suppose $\Gamma_1= \Gamma_0 \mathop\square P_\nu$, where $P_\nu$ is the path graph on $\nu$ vertices. Then
		\[
		d(p,q)=\begin{cases}
			\frac{2}{\nu+1} &\text{ if } p=q=\frac{\nu+1}{2} \\
			\frac{3}{2(\nu+1)}& \text{ if } (p=q,\, p+q \neq \nu+1) \text{ or } (p \neq q,\, p+q=\nu+1)\\
			\frac{1}{\nu+1} &\text{ otherwise. }
		\end{cases}
		\]
\end{thm}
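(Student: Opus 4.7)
The starting point is formula (5) of Proposition \ref{prp:mcks}, which reduces the computation of $d(p,q)$ to a finite-graph eigenprojection calculation on $G_F=P_\nu$:
\[
d(p,q)=\sum_{s=1}^{\nu'_F}|P_{\mu_s}(v_p,v_q)|^2.
\]
For the path graph $P_\nu$ the adjacency spectrum is well known: the eigenvalues are $\mu_k=2\cos\bigl(\tfrac{k\pi}{\nu+1}\bigr)$ for $k=1,\dots,\nu$, all simple, with orthonormal eigenvectors $w_k(j)=\sqrt{\tfrac{2}{\nu+1}}\sin\bigl(\tfrac{jk\pi}{\nu+1}\bigr)$. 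Since each eigenspace is $1$-dimensional, $\nu'_F=\nu$ and each $P_{\mu_k}$ is the rank-one projection $w_k w_k^*$. Substituting gives
\[
d(p,q)=\frac{4}{(\nu+1)^2}\sum_{k=1}^{\nu}\sin^2\!\Bigl(\frac{pk\pi}{\nu+1}\Bigr)\sin^2\!\Bigl(\frac{qk\pi}{\nu+1}\Bigr).
\]

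The plan is then to evaluate this sum by trigonometric linearization. Setting $\theta=\pi/(\nu+1)$ and using $\sin^2 x=\tfrac12(1-\cos 2x)$ together with the product-to-sum identity, the integrand becomes a linear combination of $1$, $\cos(2pk\theta)$, $\cos(2qk\theta)$, $\cos(2(p-q)k\theta)$ and $\cos(2(p+q)k\theta)$. So everything reduces to computing the Dirichlet-type sum $S_m:=\sum_{k=1}^{\nu}\cos(mk\theta)$ for $m\in\{2p,2q,2(p-q),2(p+q)\}$. A geometric-series computation (extending the sum to start at $k=0$) shows that $S_m=\nu$ when $m$ is a nonzero even multiple of $\nu+1$ (equivalently $e^{im\theta}=1$), and $S_m=-1$ when $m$ is a nonzero even integer that is not a multiple of $2(\nu+1)$; for $m=0$ we trivially get $S_0=\nu$.

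Applying this dichotomy, since $1\le p,q\le \nu$ gives $|p\pm q|\le 2\nu$, one sees that the only ways $S_m$ can take the value $\nu$ (other than the trivial $p=q$ contribution in $S_{2(p-q)}$) are: $2p=2(\nu+1)$, which is impossible; $2q=2(\nu+1)$, impossible; and $2(p+q)=2(\nu+1)$, i.e.\ $p+q=\nu+1$. This leads naturally to the partition into the three cases of the theorem: (i) $p=q=(\nu+1)/2$ (which requires $\nu$ odd and gives both $S_{2(p-q)}=\nu$ and $S_{2(p+q)}=\nu$); (ii) exactly one of $p=q$ or $p+q=\nu+1$ holds; (iii) neither holds. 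Plugging $S_{2p}=S_{2q}=-1$ together with the appropriate values of $S_{2(p\pm q)}$ into the linearized expression yields $\sum_k\sin^2(pk\theta)\sin^2(qk\theta)$ equal to $\tfrac{\nu+1}{2}$, $\tfrac{3(\nu+1)}{8}$ and $\tfrac{\nu+1}{4}$ respectively; multiplying by $4/(\nu+1)^2$ gives the three values $2/(\nu+1)$, $3/(2(\nu+1))$ and $1/(\nu+1)$ stated in the theorem.

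There is no genuine obstacle here: the only mildly delicate point is bookkeeping the case analysis, namely verifying that among the relevant frequencies $2p$, $2q$, $2(p\pm q)$, the only one that can align with $2(\nu+1)\Z\setminus\{0\}$ is $2(p+q)$, and that this happens precisely when $p+q=\nu+1$. Once this is checked, the three cases of the theorem correspond exactly to the three possible combinations of $(S_{2(p-q)},S_{2(p+q)})\in\{-1,\nu\}^2$ (the combination $(-1,-1)$ being the generic case (iii), $(\nu,-1)$ and $(-1,\nu)$ collapsing to case (ii) because the trigonometric expression is symmetric in the two frequencies, and $(\nu,\nu)$ being case (i)).
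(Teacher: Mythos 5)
Your proposal is correct and follows essentially the same route as the paper: the same eigendata for $P_\nu$, the reduction to rank-one projections, the $\sin^2 x=\tfrac12(1-\cos 2x)$ linearization, the Dirichlet-sum evaluation $S_m\in\{-1,\nu\}$, and the identical three-way case analysis on whether $p=q$ and/or $p+q=\nu+1$; your intermediate values $\tfrac{\nu+1}{2}$, $\tfrac{3(\nu+1)}{8}$, $\tfrac{\nu+1}{4}$ all check out against the stated weights. The only point worth making explicit in a full write-up is the bound $2(p+q)\le 4\nu<4(\nu+1)$ justifying that $p+q=\nu+1$ is the unique nonzero alignment, which you have already noted.
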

The convention here is that if $\nu$ is even, the first branch never occurs.

\begin{figure}[h!]
	\centering
	\begin{minipage}{.5\textwidth}
		\centering
		\includegraphics[width=0.8\linewidth,height=45mm]{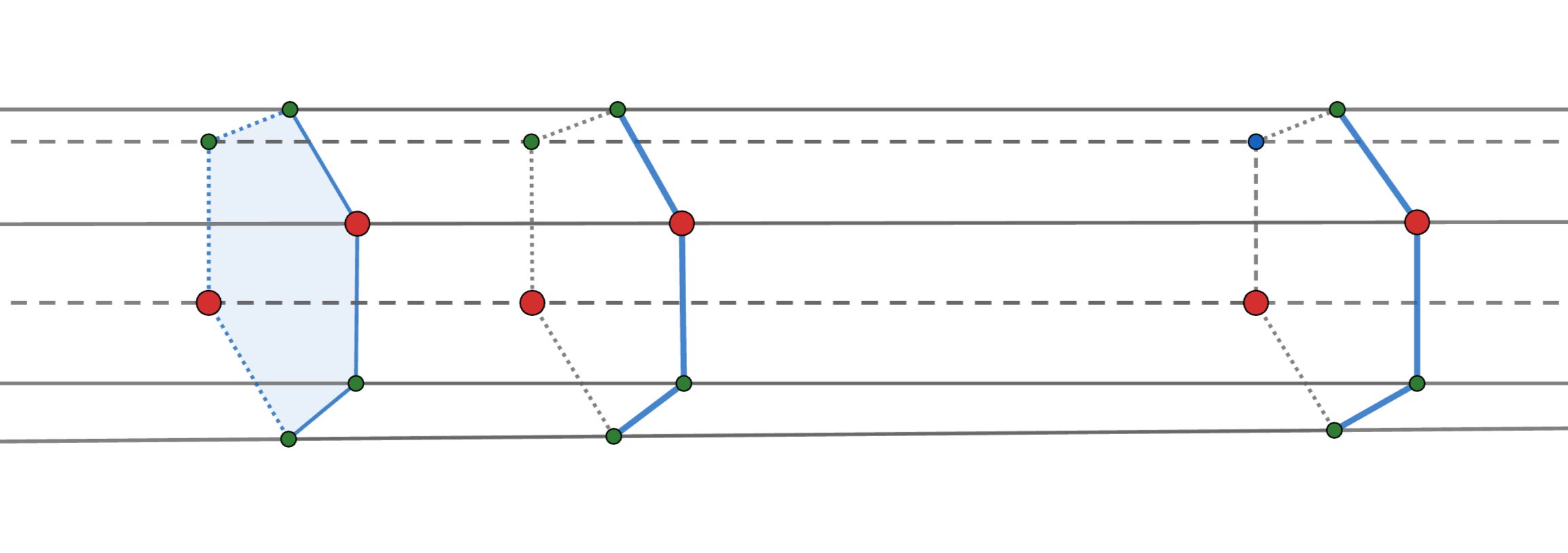}
	\end{minipage}%
	\begin{minipage}{.5\textwidth}
		\centering
		\includegraphics[width=0.8\linewidth,height=45mm]{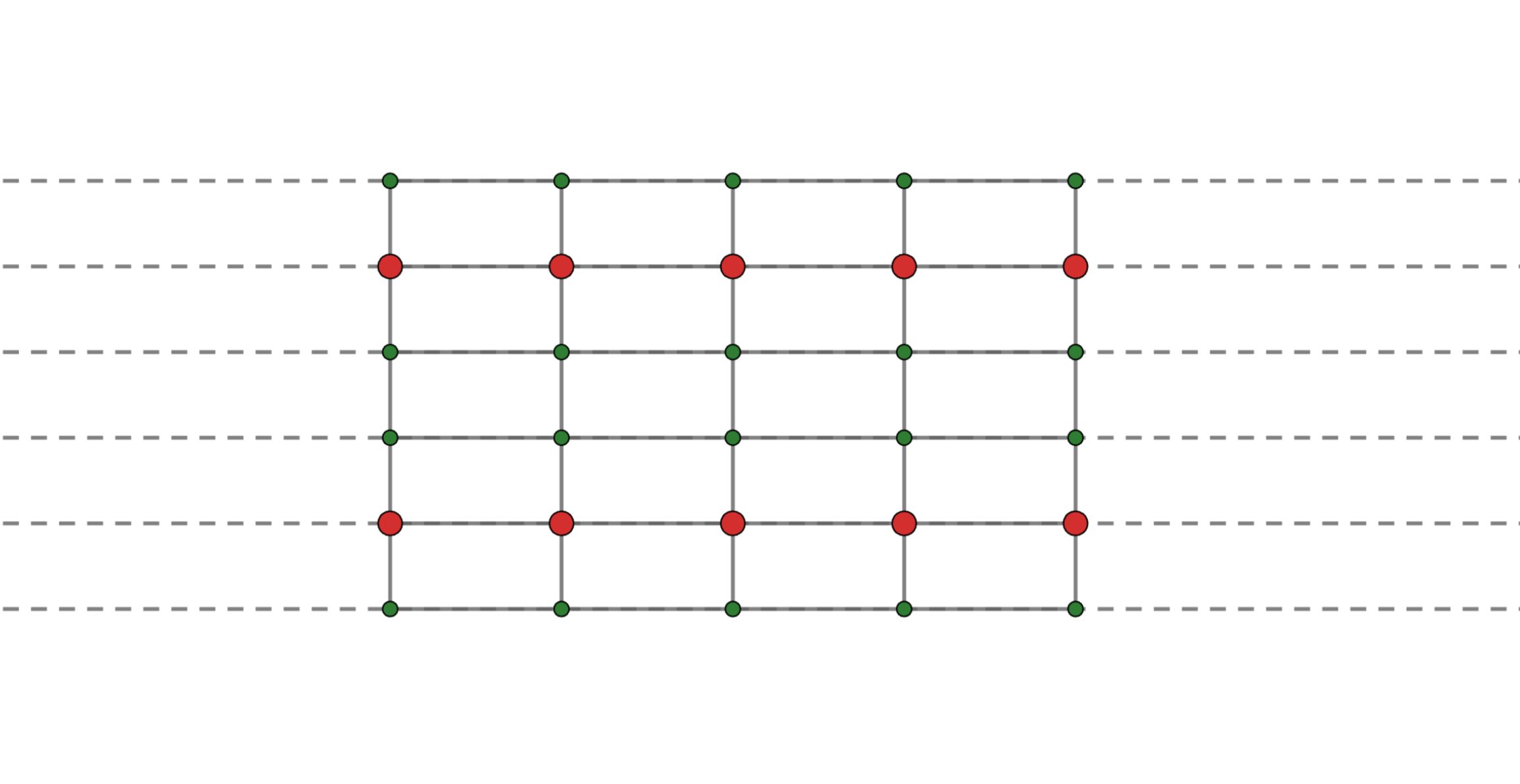}
	\end{minipage}
	\caption{The limiting distribution of the continuous-time quantum walk on graphs of the form $\Z \mathop\square  G_F$ for $G_F$ as the cycle $C_6$ (left) and the path $P_6$ (right). In both figures, the walk starts at a red vertex, and it is the vertex where most of the weight concentrates. All the green vertices have the same weight, lower than the red vertices.}
	\label{fig:path&cycle}
\end{figure}

\subsubsection{Stars}
Star graphs behave very differently, see \S\,\ref{sec:unders} for more comments.
\begin{thm}\label{thm:star}
		Suppose $\Gamma_1= \Gamma_0 \mathop\square K_{\nu,1}$, where $K_{\nu,1}$ is the star graph on $\nu+1$ vertices with $\nu+1$ as the central vertex. Then
		\[
		d(p,q) = \begin{cases} \frac{(\nu-1)^2}{\nu^2}+\frac{1}{2\nu^2}& \text{if } 1\le p\le \nu \text{ and } q=p,\\ \frac{1}{2\nu}& \text{if } 1\le p\le \nu \text{ and } q=\nu+1,\\ \frac{3}{2\nu^2} & \text{if } 1\le p\le \nu \text{ and } 1\le q \le \nu \text{ and } q\neq p,\\ \frac{1}{2}& \text{ if } p=q=\nu+1. \end{cases}
		\]
\end{thm}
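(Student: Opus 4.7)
My plan is to reduce everything to an explicit spectral decomposition of the adjacency matrix $A=A_{K_{\nu,1}}$, then plug into formula (5) of Proposition~\ref{prp:mcks}. By item (4) of that proposition, assumption \eqref{e:flo} is automatic for the Cartesian product $\Gamma_0\mathop\square K_{\nu,1}$, so the formula
\[
d(p,q)=\sum_{s=1}^{\nu_F'}|P_{\mu_s}(v_p,v_q)|^2
\]
applies, and the problem becomes purely one about the finite graph $K_{\nu,1}$.

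The first step is to diagonalize $A$. Labeling the leaves $1,\ldots,\nu$ and the hub $\nu+1$, a direct computation (or the fact that $A^2$ equals $\nu$ times the rank-one projection onto $\delta_{\nu+1}$ plus the orthogonal projection onto $\spa\{\delta_1,\ldots,\delta_\nu\}$) shows that the spectrum is $\{\sqrt{\nu},\,0,\,-\sqrt{\nu}\}$ with multiplicities $1,\ \nu-1,\ 1$, so $\nu_F'=3$. The normalized eigenvectors for the simple eigenvalues are
\[
w_{\pm}=\frac{1}{\sqrt{2\nu}}\bigl(1,1,\ldots,1,\pm\sqrt{\nu}\bigr),
\]
giving the kernels $P_{\pm\sqrt{\nu}}(v_p,v_q)=w_\pm(v_p)\overline{w_\pm(v_q)}$.

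The second step is to avoid producing an orthonormal basis of the $(\nu-1)$-dimensional kernel by using the completeness relation
\[
P_0=I-P_{\sqrt{\nu}}-P_{-\sqrt{\nu}}.
\]
This is the key simplification that makes the computation short; otherwise one would have to pick a basis of $\{f\in\C^{\nu+1}:f(\nu+1)=0,\ \sum_{i=1}^\nu f(i)=0\}$ and sum. With this in hand I split into the four cases of the theorem. For $1\le p=q\le \nu$ the two nonzero kernels each contribute $\frac{1}{2\nu}$ and $P_0(p,p)=1-\frac1\nu$; for $1\le p\neq q\le \nu$ they again contribute $\frac{1}{2\nu}$ and $P_0(p,q)=-\frac1\nu$; for the hub–leaf pair $P_{\pm\sqrt{\nu}}(p,\nu+1)=\pm\frac{1}{2\sqrt{\nu}}$, so the $P_0$ contribution cancels; and for $p=q=\nu+1$ the two simple eigenvalues each contribute $\frac12$. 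Squaring and summing yields the four branches in the statement.

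I do not expect a genuine obstacle here; the only thing one must be careful about is to remember that the sum in Proposition~\ref{prp:mcks}(5) runs over \emph{distinct} eigenvalues, so the $(\nu-1)$-fold degeneracy at $0$ contributes a single term $|P_0(v_p,v_q)|^2$ rather than a sum of $\nu-1$ terms; the completeness trick above handles this cleanly and sidesteps the basis-dependence issue emphasized after \eqref{e:limcar}.
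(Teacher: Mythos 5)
Your proof is correct and takes essentially the same route as the paper: diagonalize $A_{K_{\nu,1}}$ into the simple eigenvalues $\pm\sqrt{\nu}$ and the $(\nu-1)$-dimensional kernel, then sum $|P_{\mu_s}(v_p,v_q)|^2$ over the three distinct eigenvalues. The only (harmless) differences are that you obtain $P_0$ via the completeness relation $P_0=I-P_{\sqrt{\nu}}-P_{-\sqrt{\nu}}$ rather than from an explicit orthonormal basis of the kernel, and you compute the leaf--leaf off-diagonal case directly instead of deducing it from symmetry and the normalization $\sum_q d(p,q)=1$ as the paper does.
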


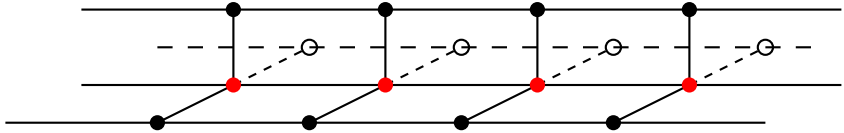
\begin{figure}[h!]
\begin{center}
\setlength{\unitlength}{1cm}
\thicklines
\begin{picture}(1.3,1.5)(-1.3,-1.5)
   \put(-5,0){\line(1,0){10}}
	 \put(-5,-1){\line(1,0){10}}
	 \put(-3,-1){\line(0,1){1}}
	 \put(-1,-1){\line(0,1){1}}
	 \put(1,-1){\line(0,1){1}}
	 \put(3,-1){\line(0,1){1}}
	 \put(1,-1){\line(-1,-0.5){1}}
	 \put(3,-1){\line(-1,-0.5){1}}
 	 \put(-1,-1){\line(-1,-0.5){1}}
	 \put(-3,-1){\line(-1,-0.5){1}}
	\multiput(-1,-1)(0.2,0.1){5}{\line(1,0.5){0.1}}
	\multiput(1,-1)(0.2,0.1){5}{\line(1,0.5){0.1}}
	\multiput(3,-1)(0.2,0.1){5}{\line(1,0.5){0.1}}
	\multiput(-3,-1)(0.2,0.1){5}{\line(1,0.5){0.1}}
	 \put(-1,-1){\textcolor{red}{\circle*{.2}}}
	 \put(-1,0){\circle*{.2}}
	 \put(-3,-1){\textcolor{red}{\circle*{.2}}}
	 \put(-3,0){\circle*{.2}}
	 \put(1,-1){\textcolor{red}{\circle*{.2}}}
	 \put(1,0){\circle*{.2}}
	 \put(3,-1){\textcolor{red}{\circle*{.2}}}
	 \put(3,0){\circle*{.2}}
	\put(-6,-1.5){\line(1,0){10}}
	\multiput(-4,-0.5)(0.4,0){22}{\line(1,0){0.2}}
	\put(-2,-1.5){\circle*{.2}}
	\put(-4,-1.5){\circle*{.2}}
	\put(0,-1.5){\circle*{.2}}
	\put(2,-1.5){\circle*{.2}}
	\put(-2,-0.5){\circle{.2}}
	\put(0,-0.5){\circle{.2}}
	\put(2,-0.5){\circle{.2}}
	\put(4,-0.5){\circle{.2}}
\end{picture}
\caption{The graph $\Z\mathop\square K_{3,1}$. No matter how many edges the star $K_{\nu,1}$ has, if we launch the walk from a red vertex, half the mass spreads over the line of red vertices, while the remainder gets equidistributed over the leaves. The situation is more dramatic if the walk is launched from a leaf vertex: if $\nu=10$, then $0.815$ of the mass spreads over the layer of this leaf.}\label{fig:star}
\end{center}
\end{figure}

\subsubsection{Hypercubes}
Recall that the $m$-dimensional hypercube $\mathbf{H}_m$ has vertex set $\Z_2^m$. Two vertices are adjacent if they differ by a single digit, hence $\mathbf{H}_m$ is $m$-regular. Graph products $\Gamma_0\mathop\square \mathbf{H}_m$ have $G_F = \mathbf{H}_m$ as fundamental domain. Up to renumbering the vertices within $G_F$, we may start the quantum walk from $v_p = (0,0,\dots,0)$.

\begin{thm}\label{thm:hyp}
Suppose $\Gamma_1 = \Gamma_0  \mathop\square \mathbf{H}_m$, where $\mathbf{H}_m = \Z_2^m$ is the $m$-dimensional hypercube. Let $v_0=(0,\dots,0)$. Define $B_u\subset \Z_2^m$ as the subset of vertices having exactly $u$ entries equal to $1$. Then for any $v_q\in B_u$,
\begin{equation}\label{e:weighhyp}
d(0,q) = \frac{1}{2^{2m}} \sum_{j=0}^m \left(\sum_{b=0}^u (-1)^{b}\binom{u}{b}\binom{m-u}{j-b}\right)^2\,.
\end{equation}

The convention here is that $\binom{m-u}{j-b}=0$ if $j<b$ or $j-b>m-u$.

In particular, $d(0,0)=\frac{1}{2^{2m}}\binom{2m}{m} = d(0,m)$, while $d(0,q) = \frac{1}{2^{2m}}[\binom{2m}{m}-4\binom{2m-1}{m-1}+4\binom{2m-2}{m-1}]$ for $v_q\in B_1$. We also have the symmetry $d(0,q)=d(0,q')$ if $v_q\in B_u$ and $v_{q'}\in B_{m-u}$.
\end{thm}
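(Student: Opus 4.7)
The plan is to apply Proposition~\ref{prp:mcks}(5) with $G_F=\mathbf{H}_m$, which reduces the theorem to computing the spectral projection kernels $P_{\mu_s}(v_0,v_q)$ of the hypercube adjacency matrix $A_{\mathbf{H}_m}$. For this I use Fourier analysis on $\Z_2^m$: for each $S\subseteq\{1,\dots,m\}$, the character
\[
w_S(x):=\tfrac{1}{2^{m/2}}(-1)^{|S\cap \supp(x)|},\qquad x\in\Z_2^m,
\]
satisfies
\[
(A_{\mathbf{H}_m}w_S)(x)=\sum_{i=1}^m w_S(x+e_i)=\Bigl(\sum_{i\notin S}1-\sum_{i\in S}1\Bigr)w_S(x)=(m-2|S|)w_S(x),
\]
so $\{w_S\}_S$ is an orthonormal eigenbasis of $A_{\mathbf{H}_m}$. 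The distinct eigenvalues are therefore $\mu_j=m-2j$ for $j=0,\dots,m$, each with eigenspace $\spa\{w_S:|S|=j\}$.

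Next I compute the projection kernel. Writing $T=\supp(v_q)$ and $u=|T|$ (so that $v_q\in B_u$), and using $w_S(v_0)=2^{-m/2}$ since $v_0=0$, I obtain
\[
P_{\mu_j}(v_0,v_q)=\sum_{|S|=j}w_S(v_0)\overline{w_S(v_q)}=\frac{1}{2^m}\sum_{|S|=j}(-1)^{|S\cap T|}.
\]
Stratifying the subsets $S$ of size $j$ by $b:=|S\cap T|$ (pick $b$ elements from $T$ and $j-b$ from $T^c$) gives the inner sum $\sum_{b=0}^u(-1)^b\binom{u}{b}\binom{m-u}{j-b}$. Inserting this into $d(0,q)=\sum_{j=0}^m|P_{\mu_j}(v_0,v_q)|^2$ yields \eqref{e:weighhyp} directly.

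For the special cases, I note that the inner sum is precisely the coefficient of $x^j$ in the polynomial $g_u(x):=(1-x)^u(1+x)^{m-u}$. When $u=0$, $g_0(x)=(1+x)^m$ so the coefficient is $\binom{m}{j}$ and $\sum_j\binom{m}{j}^2=\binom{2m}{m}$ by Vandermonde, giving $d(0,0)=2^{-2m}\binom{2m}{m}$. For $v_q\in B_1$ the coefficient is $\binom{m-1}{j}-\binom{m-1}{j-1}$; expanding the square and applying Vandermonde to each of the three resulting sums, then simplifying via Pascal's identity together with $\binom{2m}{m}=2\binom{2m-1}{m-1}$, produces the claimed expression. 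The symmetry $d(0,q)=d(0,q')$ for $v_q\in B_u$ and $v_{q'}\in B_{m-u}$ follows from $g_{m-u}(-x)=g_u(x)$: the $j$-th coefficients of $g_u$ and $g_{m-u}$ differ only by a sign $(-1)^j$ which disappears upon squaring.

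The main obstacle is essentially bookkeeping rather than anything conceptual: once the Fourier/character eigenbasis of $\mathbf{H}_m$ is written down, the spectral decomposition is entirely explicit, and all that remains are standard identities on binomial coefficients. The only mild subtlety is to maintain the vanishing convention for out-of-range binomials consistently throughout, which is what allows \eqref{e:weighhyp} to be stated as a single closed formula rather than splitting into cases on whether $j<b$ or $j-b>m-u$.
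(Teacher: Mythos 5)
Your proposal is correct and follows essentially the same route as the paper: the character eigenbasis of $\Z_2^m$ with eigenvalues $m-2|S|$, and the stratification of $\{|S|=j\}$ by $b=|S\cap T|$ to obtain the inner binomial sum, which is exactly the paper's partition $F_{u,k}(b)$. Your packaging of the special cases and the $B_u\leftrightarrow B_{m-u}$ symmetry via the generating function $(1-x)^u(1+x)^{m-u}$ is a slightly cleaner presentation of the same identities the paper verifies by direct manipulation.
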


Let us calculate \eqref{e:weighhyp} explicitly in simple cases. The graphs $\mathbf{H}_1$ and $\mathbf{H}_2$ give a segment $P_2$ and a $4$-cycle $C_4$, respectively, which are covered by previous theorems. So let us consider the cases of the cube and tesseract:

\begin{cor}
Define $B_u$ as in Theorem~\ref{thm:hyp}. In case of the cube $\mathbf{H}_3$, we get
\[
d(0,q)=\begin{cases}
	5/16 &\text{ if } q \in B_0 \\
	1/16 &\text{ if } q \in B_1 \\
	1/16 &\text{ if } q \in B_2 \text{ and}\\
	5/16 &\text{ if } q \in B_3.
\end{cases}
\]

In case of the tesseract $\mathbf{H}_4$, we get
\[
d(0,q)=\begin{cases}
	35/128 &\text{ if } q \in B_0 \\
	5/128 &\text{ if } q \in B_1 \\
	3/128 &\text{ if } q \in B_2 \\
	5/128 &\text{ if } q \in B_3 \text{ and}\\
	35/128 &\text{ if } q \in B_4.
\end{cases}
\]
\end{cor}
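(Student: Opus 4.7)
The corollary is a direct numerical evaluation of formula \eqref{e:weighhyp} of Theorem~\ref{thm:hyp} for $m=3$ and $m=4$, so the plan is computational. My first step is to recast the inner sum in a more tractable form: by the binomial theorem,
\[
\sum_{b=0}^u (-1)^b \binom{u}{b}\binom{m-u}{j-b}
\]
is exactly the coefficient of $x^j$ in $P_{u,m}(x):=(1-x)^u(1+x)^{m-u}$. Denote this coefficient by $c_j^{(u,m)}$. Thus \eqref{e:weighhyp} becomes
\[
d(0,q) \;=\; \frac{1}{4^m}\sum_{j=0}^m \bigl(c_j^{(u,m)}\bigr)^2 \qquad (v_q\in B_u).
\]
For the extreme case $u=0$, $P_{0,m}(x)=(1+x)^m$ so $c_j^{(0,m)}=\binom{m}{j}$, and the Vandermonde--Chu identity gives $\sum_j\binom{m}{j}^2=\binom{2m}{m}$; this reproduces the $d(0,0)=\binom{2m}{m}/4^m$ already recorded in the theorem, yielding $5/16$ for $m=3$ and $35/128$ for $m=4$. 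The case $u=m$ follows from the symmetry $d(0,q)=d(0,q')$ stated in the theorem ($B_m\leftrightarrow B_0$, $B_{m-1}\leftrightarrow B_1$, etc.).

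For the remaining values I simply expand $P_{u,m}$. For $m=3$: $(1-x)(1+x)^2=1+x-x^2-x^3$ gives $\sum(c_j)^2=4$, hence $d=1/16$; and $(1-x)^2(1+x)=1-x-x^2+x^3$ also gives $4$, hence $1/16$. For $m=4$ the cases $u=1$ and $u=3$ give, via $(1-x)(1+x)^3=1+2x-2x^3-x^4$, the sum of squares $1+4+0+4+1=10$, hence $d=5/128$. For $u=2$ the factorisation $(1-x)^2(1+x)^2=(1-x^2)^2=1-2x^2+x^4$ yields $1+4+1=6$, hence $d=3/128$. Substituting in \eqref{e:weighhyp} then produces all entries in the corollary.

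No step presents any real obstacle; the only mild risk is arithmetic slips in expanding $P_{u,m}$. One can double-check the computation by invoking Parseval on the unit circle, since
\[
\sum_{j=0}^m (c_j^{(u,m)})^2 \;=\; \frac{1}{2\pi}\int_0^{2\pi}|P_{u,m}(e^{\ii\theta})|^2\,\dd\theta \;=\; \frac{4^m}{\pi}\int_0^\pi \sin^{2u}\phi\cos^{2(m-u)}\phi\,\dd\phi,
\]
which evaluates via the beta function to $(2u)!(2m-2u)!/(u!(m-u)!m!)$; plugging in $m\in\{3,4\}$ and $u\in\{0,1,2\}$ reproduces the numerators $20,4,4$ and $70,10,6$ used above and confirms the table. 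I will however present only the direct polynomial expansions in the written proof, as they are shorter for these small cases.
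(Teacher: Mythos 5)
Your computation is correct, and it verifies all eight entries of the corollary. The paper itself offers no separate proof: the corollary is presented as a direct numerical evaluation of \eqref{e:weighhyp} for $m=3,4$, which is exactly what you do, so in substance the two arguments coincide. What you add is a genuinely useful packaging: identifying $\sum_{b}(-1)^b\binom{u}{b}\binom{m-u}{j-b}$ as the coefficient of $x^j$ in $(1-x)^u(1+x)^{m-u}$ turns the sum of squares into $\frac{1}{2\pi}\int_0^{2\pi}\lvert(1-\ee^{\ii\theta})^u(1+\ee^{\ii\theta})^{m-u}\rvert^2\,\dd\theta$, and the beta-function evaluation gives the closed form
\[
d(0,q)=\frac{1}{4^m}\,\frac{(2u)!\,(2m-2u)!}{u!\,(m-u)!\,m!}=\frac{1}{4^m}\binom{2u}{u}\binom{2(m-u)}{m-u}\binom{m}{u}^{-1}\qquad(v_q\in B_u),
\]
valid for every $m$ and $u$, not just the cases $u\in\{0,1,m-1,m\}$ treated explicitly in Theorem~\ref{thm:hyp}. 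This subsumes the paper's identities $d(0,0)=\binom{2m}{m}/4^m$ and the $B_1$ formula, makes the symmetry $u\leftrightarrow m-u$ manifest, and would be worth recording; the polynomial expansions you present for $m=3,4$ are a perfectly adequate, self-contained proof of the corollary as stated.
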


\subsubsection{Understanding the results}\label{sec:unders}
The previous theorems illustrate that the limiting measure, though spread out (all vertices carry a weight $\asymp N^{-d}$), is not uniform: the weight $d(p,q)\neq \frac{1}{\nu}$, instead, it depends on the initial point $p$, and the masses vary over the fundamental domain (as $q$ varies).

However, there is a large contrast in the limiting behaviors across the graphs: in case of $\Gamma_0\mathop\square C_\nu$ and $\Gamma_0\mathop \square P_\nu$, the mass is almost uniform, $d(p,q)$ is of order $\frac{1}{\nu}$ everywhere. In case of $\Gamma_0\mathop\square \mathbf{H}_m$ on the other hand, the mass is highly concentrated on the starting vertex $\mathbf{0}=(0,\dots,0)$ and the one diametrically opposite to it, $\mathbf{1}=(1,\dots,1)$: on both vertices, the mass if $\frac{1}{2^{2m}}\binom{2m}{m}$, which is of order $\frac{1}{\sqrt{m}}$, much larger than the uniform weight $\frac{1}{2^m}$. The situation is even worse in star graphs $K_{1,\nu}$, where if we take $\nu\gg 1$, the mass concentrates macroscopically over the line from which the walk was launched.

We can say that in Theorems \ref{thm:cyc} andd \ref{thm:path}, we have ergodicity both in the ``horizontal'' direction (the mass spreads as $N^{-d}$) and the ``vertical'' or ``sectional'' direction (all vertices actually carry a weight $\approx \frac{1}{\nu N^d}$), while in Theorems \ref{thm:star} and \ref{thm:hyp}, ergodicity only holds in the horizontal direction.

Theorems \ref{thm:cyc} and \ref{thm:hyp} offer a stark contrast with the following:

\medskip

\textbf{Fact.} Consider a classical random walk on a finite graph with $n$ vertices and $m$ edges, where a particle at vertex $v$ hops uniformly at random to a neighbor $w$ with probability $\frac{1}{\deg(v)}$. Then the distribution $\pi(v)=\frac{\deg(v)}{2m}$ is stationary, and if the graph is non-bipartite, this distribution is also the limiting distribution that the particle ends up at $v$, i.e. $\lim_{n\to \infty} p_{u,v}(n) = \pi(v)$.

\medskip

See \cite{Lov} for details. In particular, if the graph is $d$-regular, the limiting distribution of the classical walk must be uniform if it exists, $\pi(v) = \frac{d}{2m}=\frac{1}{n}$. In our case, the graphs $G_F = C_\nu$ and $G_F=\mathbf{H}_m$ are regular. Since we are considering periodic boundary conditions, this implies the corresponding graphs $\Gamma_N$ are also regular. So contrary to Theorems \ref{thm:cyc} and \ref{thm:hyp}, the classical walk would equidistribute on these graphs. The theorems seem to indicate an interesting quantum effect.

\section{Proofs}

\begin{proof}[Proof of Proposition~\ref{prp:mcks}]
Claims (1) and (2) are proved in \cite{McKS}, Sections 3.2 and 3.4, respectively and (3) follows directly from (1) and (2). For point (4), the validity of the Floquet assumption for the Cartesian product is part of Proposition 1.3 in \cite{McKS}, while its failure is obvious in the tensor product case if $G_F$ has an eigenvalue $\mu_j=0$ (e.g. $G_F=C_4$, the $4$-cycle), since in this case $E_j(\theta) = E_{\Gamma_0}\mu_j=0$ is constant, so that $\Gamma_2$ has a flat band and the Floquet assumption is violated. Similarly, if $\mu_j=-1$, we get a flat band for $\Gamma_3$.

To see the validity of Claim (5), note that the Floquet matrices for $\Gamma_1$ and $\Gamma_2$ are given by $A_{\Gamma_1}(\theta) = E_{\Gamma_0}(\theta)I_{\nu_F} + A_{G_F}$ and $A_{\Gamma_2}(\theta) = E_{\Gamma_0}(\theta)A_{G_F}$, respectively. In particular, they share the same eigenvectors as $A_{G_F}$, with $A_{\Gamma_1}(\theta)w_j = (E_{\Gamma_0}+\mu_j)w_j$ and $A_{\Gamma_2}(\theta)w_j = E_{\Gamma_0}(\theta)\mu_j w_j$. This gives $P_s(\theta) = \langle \cdot, w_s\rangle w_s$. Moreover, $E_j(\theta)=E_i(\theta)$ iff $\mu_j = \mu_i$. So $P_{E_s}(\theta) = \sum_{w\,:\,\mu_s=\mu_j} \langle \cdot, w_j\rangle w_j$ and $[P_{E_s}(\theta)\delta_{v_q}](v_p) = \sum_{w\,:\,\mu_s=\mu_j} w_j(v_p)\overline{w_j(v_q)}$.
\end{proof}

%
%
%
%


\begin{proof}[Proof of Theorem~\ref{thm:cyc}]
As the cycle is homogeneous, we may assume without loss that the starting point $p=0$ (enumerate the vertices starting from $v_p$).

The eigenvalues of $C_\nu$ are $\mu_r = 2\cos(\frac{2\pi r}{\nu})$, with corresponding orthonormal eigenvectors $w_r = \frac{1}{\sqrt{\nu}}(1,\ee^{\frac{2\pi\ii r}{\nu}}, \ee^{\frac{2\pi\ii (2r)}{\nu}},\dots, \ee^{\frac{2\pi \ii (\nu-1)r}{\nu}})$.

Note that $\mu_r = \mu_{\nu-r}$ for $1\le r\le \nu-1$.

$\bullet$~If $\nu$ is odd, then $\mu_0=2$ is simple and each $\mu_r$ with $1\le r\le \frac{\nu-1}{2}$ has multiplicity $2$. The eigenvector for $\mu_0$ is $w_0 = \frac{1}{\sqrt{\nu}}(1,\dots,1)$, and for $\mu_r$, we have the two eigenvectors $w_r = \frac{1}{\sqrt{\nu}}(1,\ee^{\frac{2\pi\ii r}{\nu}}, \ee^{\frac{2\pi\ii (2r)}{\nu}},\dots, \ee^{\frac{2\pi \ii (\nu-1)r}{\nu}})$ and $w_{\nu-r} = \frac{1}{\sqrt{\nu}}(1,\ee^{\frac{-2\pi\ii r}{\nu}}, \ee^{\frac{-2\pi\ii (2r)}{\nu}}, \dots, \ee^{-2\pi\ii\frac{(\nu-1)r}{\nu}})$. We deduce that $P_{\mu_0}(v_p,v_q) = w_0(v_p)\overline{w_0(v_q)}$ and $P_{\mu_r}(v_p,v_q) = w_r(v_p)\overline{w_r(v_q)}+w_{\nu-r}(v_p)\overline{w_{\nu-r}(v_q)}$ for $1\le r\le \frac{\nu-1}{2}$.

Taking $p=0$, we get
\begin{align*}
\sum_{s=0}^{\frac{\nu-1}{2}} |P_{\mu_s}(v_0,v_q)|^2 &= \frac{1}{\nu^2} + \frac{1}{\nu^2} \sum_{r=1}^{\frac{\nu-1}{2}} |\overline{w_r(v_q)+w_{\nu-r}(v_q)}|^2 = \frac{1}{\nu^2} + \frac{4}{\nu^2}\sum_{r=1}^{\frac{\nu-1}{2}} \cos^2\Big(\frac{2\pi rq}{\nu}\Big) \\
&= \frac{1}{\nu^2}+\frac{2}{\nu^2} \sum_{r=1}^{\frac{\nu-1}{2}} \Big[\cos\Big(\frac{4\pi rq}{\nu}\Big)+1\Big] = \frac{1}{\nu} + \frac{2}{\nu^2} \sum_{r=1}^{\frac{\nu-1}{2}} \cos\Big(\frac{4\pi rq}{\nu}\Big)\,.
\end{align*}

Now recall the identity $\sum_{k=0}^n \cos k\theta = \frac{\sin(n+\frac{1}{2})\theta+\sin \frac{\theta}{2}}{2\sin\frac{\theta}{2}}$ for $\theta\notin 2\pi\Z$. We deduce that $\sum_{r=1}^{\frac{\nu-1}{2}} \cos (r\theta) = \frac{\sin \frac{\nu\theta}{2}-\sin\frac{\theta}{2}}{2\sin\frac{\theta}{2}}$. For $\theta=\frac{4\pi q}{\nu}$, $q\neq 0$, this gives $\frac{-1}{2}$. Thus,
\[
\sum_{s=0}^{\frac{\nu-1}{2}} |P_{\mu_s}(v_0,v_q)|^2 = \begin{cases} \frac{1}{\nu} + \frac{\nu-1}{\nu^2}&\text{if }q=0,\\ \frac{1}{\nu}-\frac{1}{\nu^2}&\text{if }q\neq 0\end{cases} = \begin{cases} \frac{2\nu-1}{\nu^2} &\text{if }q=0,\\ \frac{\nu-1}{\nu^2}&\text{if }q\neq 0.\end{cases}
\]

$\bullet$~Now suppose $\nu$ is even. Then $\mu_0=2$ and $\mu_{\frac{\nu}{2}} = -2$ are simple, while $\mu_r = 2\cos(\frac{2\pi r}{\nu})$ has multiplicity $2$ for $1\le r<\frac{\nu}{2}$. It follows that
\[
\sum_{s=0}^{\frac{\nu}{2}} |P_{\mu_s}(v_0,v_q)|^2 = \frac{2}{\nu^2} + \frac{4}{\nu^2}\sum_{r=1}^{\frac{\nu}{2}-1} \cos^2\Big(\frac{2\pi rq}{\nu}\Big) = \frac{1}{\nu} + \frac{2}{\nu^2}\sum_{r=1}^{\frac{\nu}{2}-1}\cos\Big(\frac{4\pi r q}{\nu}\Big) \,.
\]
As before, $\sum_{r=1}^{\frac{\nu}{2}-1}\cos(r\theta) = \frac{\sin\frac{\nu-1}{2}\theta - \sin\frac{\theta}{2}}{2\sin\frac{\theta}{2}}$, yielding $-1$ if $\theta=\frac{4\pi q}{\nu}$. Thus,
\[
\sum_{s=0}^{\frac{\nu}{2}} |P_{\mu_s}(v_0,v_q)|^2 = \begin{cases} \frac{1}{\nu} + \frac{2}{\nu^2}(\frac{\nu}{2}-1)&\text{if }q=0,\frac{\nu}{2}\\ \frac{1}{\nu}-\frac{2}{\nu^2}&\text{if }q\neq 0, \frac{\nu}{2}\end{cases} = \begin{cases} \frac{2}{\nu}(1-\frac{1}{\nu}) &\text{if }q=0,\frac{\nu}{2}\\ \frac{1}{\nu}(1-\frac{2}{\nu})&\text{if }q\neq 0,\frac{\nu}{2}.\end{cases} \qedhere
\]
\end{proof}

\begin{proof}[Proof of Theorem~\ref{thm:path}]
We have 
$\sigma(P_\nu)=\{2\cos\big(\frac{\pi j}{\nu+1}\big): 1 \leq j \leq \nu\},$
with the eigenvector corresponding to $2\cos(\frac{\pi j}{\nu+1})$ denoted by $w_j=\sqrt{\frac{2}{\nu+1}}\big(\sin(\frac{\pi j}{\nu+1}),\sin(\frac{2 \pi j}{\nu+1}),\ldots,\sin(\frac{\nu \pi j }{\nu+1})\big).$

Each eigenvalue of $P_\nu$ has multiplicity 1, as $\frac{\pi j}{\nu+1}$ takes values in the first two quadrants, so the $w_j$ are orthogonal, and they are normalized to have $\|w_j\|=1$. Indeed, $\sum_{\ell=1}^\nu \sin^2(\frac{\pi j \ell}{\nu+1})=\frac{\nu+1}{2}$, as we see by following the same calculations as in the proof of Theorem~\ref{thm:cyc}. Therefore, $P_{\mu_j}(p,q)=w_j(p)w_j(s)$. Expanding $\sin^2 x = \frac{1-\cos 2x}{2}$, we get
\begin{equation}\label{e:sumeig}
\sum_{s=1}^\nu |P_{\mu_s}(p,q)|^2 =\frac{4}{(\nu+1)^2}\sum_{s=1}^\nu \Big(\frac{1-\cos(\frac{2\pi sq}{\nu+1})}{2}\Big)\Big(\frac{1-\cos(\frac{2\pi sp}{\nu+1})}{2}\Big) \,.
\end{equation}
Using $\sum_{s=1}^\nu \cos s\theta = \frac{\sin(\nu+\frac{1}{2})\theta - \sin\frac{\theta}{2}}{2\sin\frac{\theta}{2}}$, we get $\sum_{s=1}^\nu \cos s(\frac{2\pi \ell}{\nu+1}) = \frac{\sin(2\pi \ell-\frac{\pi \ell}{\nu+1})-\sin\frac{\pi\ell}{\nu+1}}{2\sin\frac{\pi\ell}{\nu+1}} = -1$, for $\ell=p,q$. Next, $\cos(\frac{2\pi sq}{\nu+1})\cos(\frac{2\pi sp}{\nu+1})=\frac{\cos\frac{2\pi s}{\nu+1}(p+q) + \cos\frac{2\pi s}{\nu+1}(p-q)}{2}$. Therefore,
\[
\sum_{s=1}^\nu \cos\Big(\frac{2\pi sq}{\nu+1}\Big)\cos\Big(\frac{2\pi sp}{\nu+1}\Big) = \begin{cases} \nu& \text{if } p+q=\nu+1 \text{ and } p-q=0,\\ \frac{\nu-1}{2}&\text{if } (p+q=\nu+1,\,p\neq q) \text{ or } (p+q\neq \nu+1,\,p=q),\\ -1&\text{ otherwise.}\end{cases}
\]
Recalling \eqref{e:sumeig}, we have shown that
\[
\sum_{s=1}^\nu |P_{\mu_s}(p,q)|^2 =\frac{1}{(\nu+1)^2}\Big(\nu+1+1+\sum_{s=1}^\nu\cos\Big(\frac{2\pi sq}{\nu+1}\Big)\cos\Big(\frac{2\pi sp}{\nu+1}\Big)\Big)\,.
\]
Collecting the estimates completes the proof.
\end{proof}

\begin{proof}[Proof of Theorem~\ref{thm:star}]
In the adjacency matrix $A_{K_{\nu,1}}$, the upper $\nu \times \nu$ submatrix is a zero matrix, and the last row is all ones except the last element. It follows that $w_r = \frac{1}{\sqrt{\nu}}(1,\ee^{\frac{2\pi\ii r}{\nu}}, \ee^{\frac{2\pi\ii (2r)}{\nu}},\dots, \ee^{\frac{2\pi \ii (\nu-1)r}{\nu}},0)$ is an eigenvector with eigenvalue 0 for $1 \leq r \leq \nu-1$. The other eigenvalues are $\lambda_{\pm}=\pm\sqrt{\nu}$ with eigenvectors $w_{\pm}=(\frac{\pm1}{\sqrt{2\nu}},\frac{\pm 1}{\sqrt{2\nu}},\ldots,\frac{\pm1}{\sqrt{2\nu}},\frac{1}{\sqrt{2}})$. Thus,
\[
d(p,q) = |P_0(p,q)|^2 + |P_{\sqrt{\nu}}(p,q)|^2 + |P_{-\sqrt{\nu}}(p,q)|^2 \,.
\]
We have $P_0(p,q) = \sum_{r=1}^{\nu-1} w_r(v_p)\overline{w_r(v_q)}$.

Suppose $1\le p\le \nu$. Then $P_0(p,p)=\frac{\nu-1}{\nu}$ and $P_{\sqrt{\nu}}(p,p)=P_{-\sqrt{\nu}}(p,p)=\frac{1}{2\nu}$. Thus, $d(p,p) = \frac{(\nu-1)^2}{\nu^2}+\frac{1}{2\nu^2}$. Next, $P_0(p,\nu+1)=0$, $P_{\sqrt{\nu}}(p,\nu+1)=\frac{1}{2\sqrt{\nu}}$ and $P_{-\sqrt{\nu}}(p,\nu+1)=\frac{-1}{2\sqrt{\nu}}$. Thus, $d(p,\nu+1) = \frac{1}{2\nu}$. Finally, by symmetry all vertices $q\notin \{p,\nu+1\}$ must have the same weight $M$. Since $\sum_{q=1}^{\nu+1} d(p,q)=1$, this leads to $\frac{(\nu-1)^2}{\nu^2}+\frac{1}{2\nu^2}+\frac{1}{2\nu} + (\nu-1)M=1$. This implies $M=\frac{3}{2\nu^2}$.

Now suppose $p=\nu+1$. Then $d(\nu+1,p)=d(p,\nu+1)=\frac{1}{2\nu}$ for any $1\le p\le \nu$, since the weight function $d$ is symmetric in its arguments. This implies $d(\nu+1,\nu+1)=\frac{1}{2}$ using $\sum_{q=1}^{\nu+1} d(p,q)=1$, or by direct calculation.
\end{proof}

\begin{proof}[Proof of Theorem~\ref{thm:hyp}]
The hypercube $\mathbf{H}_m$ is the Cayley graph of the group $\cG = \Z_2^m$, with generators $S = \{e_1,\dots,e_m\}$, where $e_i(k) = \delta_{i,k}$. As such, the vector of characters $w_\chi = (\chi(x))_{x\in \cG}$ is an eigenvector, with eigenvalue $\sum_{s\in S} \chi(s)$. The characters here are given by $\chi_r(x) = \ee^{\frac{2\pi\ii r\cdot x}{2}} = (-1)^{r_1x_1+\dots+r_mx_m}$, see \cite[\S 2.2]{Cayley_survey}. In particular,
\[
\sum_{s\in S} \chi_r(s) = \sum_{i=1}^m (-1)^{r_i} = \#\{ r_i=0\}-\#\{r_i=1\} = (m-k)-k = m-2k\,,
\]
where $k=\#\{r_i=1\}$. We thus get $\sigma(\cA_{\cC}) = \{-m,-m+2,\dots,0,2,\dots,m\}$ for even $m$ and $\sigma(\cA_{\cC}) = \{-m,-m+2,\dots,-1,1,\dots,m\}$ for odd $n$. The multiplicity of $\mu_k:= m-2k$ is $\#\{r \text{ having exactly } k \text{ entries } 1\} = \binom{m}{k}$.

Recall that $w_\chi = (\chi(x))$ are the eigenvectors. The eigenvalues $\mu_0$ and $\mu_m$ are simple, with $P_{\mu_0}(v_0,v_q) = w_0(v_q)w_0(v_0) = \frac{(-1)^0}{2^m}=\frac{1}{2^m}$ and $P_{\mu_m}(v_0,v_q) = w_m(v_q)w_m(v_0)=\frac{(-1)^{q_1+\dots+q_m}}{2^m}$. For the remaining eigenvalues $\mu_k$, we have
\begin{equation}\label{e:hypercomp}
P_{\mu_k}(v_0,v_q)= \sum_{r \in B_k}w_r(v_q)w_r(v_0)=\sum_{r \in B_k}\frac{1}{2^m}(-1)^{r_1q_1+\cdots+r_mq_m} \,.
\end{equation}

There are two easy cases: if $v_q\in B_0$, i.e. $v_q=(0,\dots,0)$, we get $\frac{1}{2^m} \binom{m}{k}$. Similarly, if $v_q\in B_m$, i.e. $v_q=(1,\dots,1)$, we get $\frac{1}{2^m}\sum_{r\in B_k}(-1)^{r_1+\dots+r_m} = \frac{(-1)^k}{2^m}\binom{m}{k}$.

To compute \eqref{e:hypercomp} for $v_q\in B_u$ in general, we first note that since the sum is taken over all $r\in B_k$, the expression is invariant under permutations of the vertices of $v_q$, so we may choose $v_q=(1,\dots,1,0,\dots,0)$ to simplify the notation, where $v_q$ contains $u$ entries $1$. Expression \eqref{e:hypercomp} becomes
\[
P_{\mu_k}(v_0,v_q)= \frac{1}{2^m}\sum_{r\in B_k} (-1)^{r_1+\dots+r_u} \,.
\]

Let us introduce, for $0\le b\le u$, the set
\[
F_{u,k}(b) = \{r\in B_k: r_1+\dots+r_u=b\} \,.
\]

Clearly, $\{F_{u,k}(b)\}_{b=0}^u$ is a partition of $B_k$. Hence,
\[
P_{\mu_k}(v_0,v_q)= \frac{1}{2^m} \sum_{b=0}^u \sum_{r\in F_{u,k}(b)} (-1)^{r_1+\dots+r_u} = \frac{1}{2^m}\sum_{b=0}^u (-1)^b \# F_{u,k}(b) \,.
\]

Now note that
\begin{enumerate}[(i)]
\item The number of ones in $\{r_1,\dots,r_u\}$ is $b$.
\item The number of ones in $\{r_1,\dots,r_m\}$ is $k$.
\item Hence, the number of ones in $\{r_{u+1},\dots,r_m\}$ is $k-b$.
\end{enumerate}

We now have several cases:
\begin{itemize}
\item If $k<b$, there are more ones in (i) than in (ii), which is impossible. Thus, $F_{u,k}(b)$ is empty and $\# F_{u,k}(b) = 0 = \binom{u}{b}\binom{m-u}{k-b}$, since $k-b$ is negative.
\item If $k-b>m-u$, we have more ones in (iii) than the cardinality of the set, which is impossible. So again, we get $\# F_{u,k}(b) = 0 = \binom{u}{b}\binom{m-u}{k-b}$.
\item If $0\le k-b\le m-u$, with $0\le b \le u$, then each $r\in F_{u,k}(b)$ is given by a choice of $b$ ones from $\{r_1,\dots,r_u\}$ and a choice of $k-b$ ones from $\{r_{u+1},\dots,r_m\}$. Therefore, $\# F_{u,k}(b) = \binom{u}{b}\binom{m-u}{k-b}$.
\end{itemize}

We have shown that in all cases,
\[
P_{\mu_k}(v_0,v_q) = \frac{1}{2^m} \sum_{b=0}^u (-1)^b \binom{u}{b}\binom{m-u}{k-b}\,.
\]
Since $d(0,q) = \sum_{k=0}^m |P_{\mu_k}(v_0,v_q)|^2$, this completes the proof of the general form.

The particular cases follow from the identity $\sum_{k=0}^m \binom{m}{k}^2=\binom{2m}{m}$, which settles $d(0,0)$ and $d(0,m)$. For $v_q\in B_1$, we get $\frac{1}{2^{2m}}\sum_{j=0}^m (\binom{m-1}{j}-\binom{m-1}{j-1})^2 = \frac{1}{2^{2m}}\sum_{k=0}^m(\binom{m}{j}(1-\frac{2j}{m}))^2 = \frac{1}{2^{2m}}\sum_{j=0}^m [\binom{m}{j}^2 - \frac{4j}{m}\binom{m}{j}^2 + \frac{4j^2}{m^2}\binom{m}{j}^2] = \frac{1}{2^{2m}} [\binom{2m}{m}-4\binom{2m-1}{m-1}+4\binom{2m-2}{m-1}]$.

The symmetry $B_u\leftrightarrow B_{m-u}$ may be seen directly from \eqref{e:hypercomp}. If $v_q\in B_u$, then \eqref{e:hypercomp} takes the form $\frac{1}{2^m}\sum_{r\in B_k} (-1)^{r_{i_1}+\dots+r_{i_u}}$. If we consider $v_{q'} = (1,\dots,1)-v_q\in B_{m-u}$, we get $\frac{1}{2^m}\sum_{r\in B_k} (-1)^{r_{i_u+1}+\dots+r_{i_m}}$. Here $(i_k)_{k=1}^u$ enumerate the ones of $v_q$ and $(i_k)_{k=u+1}^m$ enumerate the the zeros of $v_q$ (which are the ones of $v_{q'}$). However, $(-1)^{r_{i_u+1}+\dots+r_{i_n}} = (-1)^{2r_{i_1}+\dots+2r_{i_u}}(-1)^{r_{i_u+1}+\dots+r_{i_n}} = (-1)^{r_1+\dots+r_n}(-1)^{r_{i_1}+\dots+r_{i_u}} = (-1)^k (-1)^{r_{i_1}+\dots+r_{i_u}}$. Thus, $P_{\mu_k}(v_0,v_{q'}) = (-1)^k P_{\mu_k}(v_0,v_q)$, implying $d(0,q)=d(0,q')$.
\end{proof}

\section{Further examples}

\subsubsection{Complete graphs and complete bipartite graphs}
For complete graphs, and complete bipartite graphs, we observe most of the mass concentrating on the starting vertex. For the complete graph $K_\nu$, we denote the function $d: [\nu] \times [\nu] \rightarrow [0,1]$ defined in \eqref{e:limcar} by $d_\nu$. Due to the symmetry of the graph, $d_\nu(p,p)=d_\nu(q,q)$ for all $p,q$ and $d_{\nu}(p_1,q_1)=d_\nu(p_2,q_2)$ for all $p_1 \neq q_1,p_2 \neq q_2$. By direct computation we get $d_4(p,p)=5/8,d_5(p,p)=0.68,d_6(p,p) \approx 0.722$, $d_7(p,p)\approx 0.75$ and $d_{100}(p,p) \approx 0.98$.

\begin{figure}[h!]
	\centering
	\begin{minipage}{.5\textwidth}
		\centering
		\includegraphics[width=0.75\linewidth]{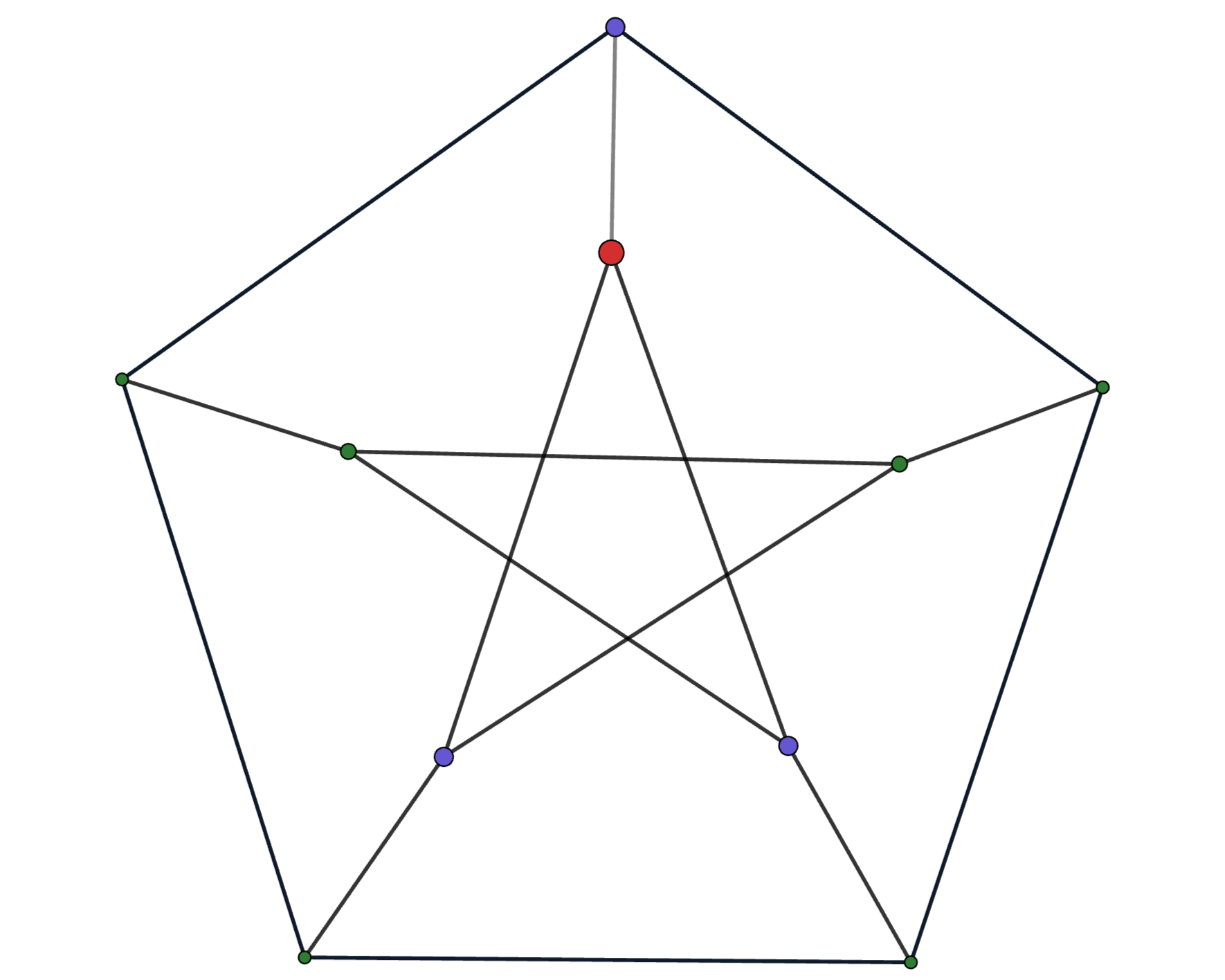}
	\end{minipage}%
	\begin{minipage}{.5\textwidth}
		\centering
		\includegraphics[width=0.7\linewidth]{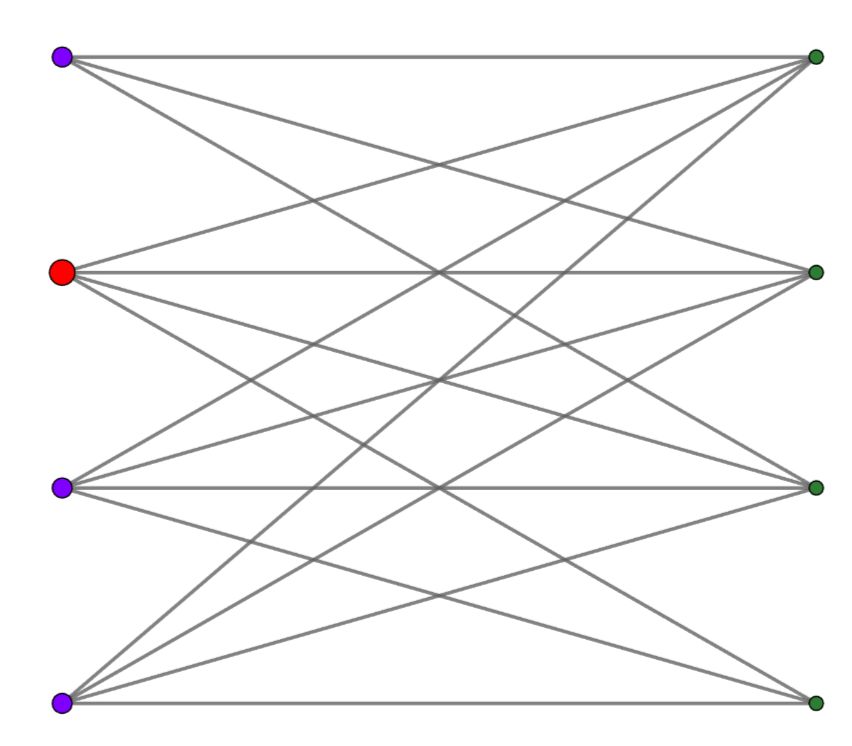}
	\end{minipage}
	\caption{Diagram showing the weight distribution of the continuous-time quantum walk on graphs of the form $\Gamma_0 \mathop\square  G_F$ for $G_F$ as Petersen graph (left) and the complete bipartite graph $K_{4,4}$(right). In both the figures, the walk starts at the red vertex, and it is the vertex where most of the weight concentrates. The purple vertices have the second-highest weight and the green vertices, the lowest.}
	\label{fig:Pet&Bipart}
\end{figure}

For complete bipartite graphs $K_{m,n}$, the vertices can be divided into three different classes sharing the same weights on them. Similar to the case of complete graphs, most of the mass gets concentrated on the starting vertex. The vertices non-adjacent to the starting vertex form the second group by weight. Finally, the vertices adjacent to the starting vertex make up the third group, following in weight distribution, having the least amount of weight. For the graph $K_{m,n}$, we denote the function $d$ by $d_{m,n}$. Then we have the values of $d_{n,n}$ as $d_{4,4}(p,p) \approx 0.59, d_{5,5}(p,p)=0.66,  d_{6,6}(p,p)\approx 0.71$ and $d_{100,100}(p,p) \approx 0.98$.

\subsubsection{Petersen graph} The Petersen graph is a popular 3-regular non-planar graph on 10 vertices, which is also the complement of the line graph of the complete graph $K_5$. For the Petersen graph (see Figure \ref{fig:Pet&Bipart}), $d(p,p)=21/50$ for all $p$, $d(p,q)=49/450$ for all $q$ adjacent to $p$ and $d(p,q)=19/450$ for all other vertices $q$. 

%

\providecommand{\bysame}{\leavevmode\hbox to3em{\hrulefill}\thinspace}
\providecommand{\MR}{\relax\ifhmode\unskip\space\fi MR }
\providecommand{\MRhref}[2]{%
}
\providecommand{\href}[2]{#2}


\begin{thebibliography}{10} 

\bibitem{BdMS0}
A.~Boutet de Monvel, M.~Sabri, \emph{Ballistic transport in periodic and random media}. In: ``From Complex Analysis to Operator Theory: A Panorama, In Memory of Sergey Naboko''. Oper. Theory Adv. Appl. 291, 163--216 (2023).

\bibitem{BdMS}
A.~Boutet de Monvel, M.~Sabri, \emph{Ergodic theorems for continuous-time quantum walks on crystal lattices and the torus}, Ann. Henri Poincar\'e (2024).

\bibitem{GJS}
G.~Grimmett, S.~Janson, P.~F.~Scudo, \emph{Weak limits for quantum random walks}, Phys. Rev. E, \textbf{69} (2004), 026119.


\bibitem{HKSS}
Y.~Higuchi, N.~Konno, I.~Sato, E.~Segawa, \emph{Spectral and asymptotic properties of Grover walks
on crystal lattices}, J. Funct. Anal. \textbf{267} (2014), 4197--4235.

\bibitem{Ko}
N.~Konno, \emph{Quantum random walks in one dimension}, Quantum Inf. Process. \textbf{1} (2002), 345--354. 

\bibitem{Ko0}
N.~Konno, \emph{Quantum walks}, Sugaku Expositions \textbf{33} (2020), 135--158.

\bibitem{McKS}
T.~McKenzie, M.~Sabri, \emph{Quantum ergodicity for periodic graphs}, Comm. Math. Phys. 403 (2023), 1477-1509. 

\bibitem{Wen1}
W.~Liu, \emph{Bloch varieties and quantum ergodicity for periodic graph operators}, J. Anal. Math. \textbf{153} (2024), 671--681.

\bibitem{Cayley_survey}
X.~G. Liu and S. Zhou, \emph{Eigenvalues of Cayley graphs}, Electron. J. Combin. {\bf 29} (2022), no.~2, Paper No. 2.9, 164 pp.; 


\bibitem{Lov}
L.~Lov\'asz, \emph{Random walks on graphs: a survey}. In: ``Combinatorics, {P}aul {E}rd\H os is eighty, {V}ol.\ 2 ({K}eszthely, 1993)'', Bolyai Soc. Math. Stud. \textbf{2}, 353--397 (1996).

\bibitem{SY}
M.~Sabri, P.~Youssef, \emph{Flat bands of periodic graphs}, J. Math. Phys. \textbf{64}, (2023).

\end{thebibliography}
\end{document}